\theoremstyle{plain}
\theoremstyle{plain}
\newtheorem{thm}{Theorem}
\newtheorem{lm}{Lemma}
\newcommand{\comment}[1]{}
\begin{document}

\title{{A Deterministic  Polynomial-Time  Protocol for Synchronizing from Deletions} \thanks{The paper is presented in part at the \textit{IEEE 7th International Symposium on Turbo Codes and Iterative Information Processing (ISTC)}, Aug. 2012.}}

\author{S. M. Sadegh Tabatabaei Yazdi and Lara Dolecek, \textit{IEEE Senior Member} \thanks{S. M. S. Tabatabaei Yazdi is with the Research and Development Center
at Qualcomm Inc., San Diego, CA, 92121. This work was done when he was a postdoc  
at the University of California, Los Angeles, CA, 90095. His email
address is {stabatab@qti.qualcomm.comm}. L. Dolecek is with the Department of Electrical Engineering at the University of California, Los Angeles, CA, 90095. Her email address is  dolecek@ee.ucla.edu.}
}
\maketitle
\begin{abstract}
In this paper, we consider a synchronization problem between nodes $A$ and $B$ that are connected through a two--way communication channel. {Node $A$} contains a binary file $X$ of length $n$ and {node $B$} contains a binary file $Y$ that is generated by randomly deleting bits from $X$, by a small deletion rate $\beta$. The location of deleted bits is not known to either node $A$ or node $B$. We offer a deterministic synchronization scheme  between nodes $A$ and $B$ that needs a total of  $O(n\beta\log \frac{1}{\beta})$  transmitted bits and reconstructs $X$ at node $B$ with probability of error that is exponentially low in the size of $X$. Orderwise, the rate of our scheme matches the optimal rate for this channel. 
\end{abstract}

\textbf{Keywords:} Two-way communication, deletion channel, synchronization, edits, coding for synchronization.

\section{Introduction}
Consider two nodes $A$ and $B$ that respectively hold files $X$ and $Y$, where file $Y$ can be derived from file $X$ by some deletions.
For instance let 
\begin{align*}
X & =00\underset{D}{1}01\underset{D}{1}00\underset{D}{0}\underset{D}{1}0101\underset{D}{1}1, \mbox{and}\\
Y & =00010001011.\end{align*}
Here $Y$ is derived from $X$ by 5 deletions, 
where deleted bits are denoted by $D$.  We call $Y$ a deleted version of $X$. 

Suppose that the locations of deleted bits  are {\em unknown} to both nodes. In this paper we are interested in the following question:
\begin{itemize}
\item What is the optimal transmission protocol for {\em synchronizing} the content of node $B$ with the content of node $A$, i.e., how to reconstruct an estimate of file $X$ at node $B$? 
\end{itemize}
By way of optimality, we are mainly concerned with the number of transmitted bits between the two nodes and the complexity of implementing the protocol at nodes $A$ and $B$. Also, as usual, we desire the reconstructed estimate of $X$ at node $B$ to have bit error probability that is exponentially small in the size of $X$. 

Synchronization from deletions is a special case of a more general synchronization problem where file $Y$ can be derived from $X$ by a sequence of edits. An edit can refer to either deletion of a bit from  file $X$ or insertion of a new bit within $X$.    
File synchronization from random edits is the subject of many practical applications. Over the web, file updating is an application where a user or a server needs to synchronize its outdated version of a file with a newer version. The new updates of a file can usually be  modeled as random edits of its content. As another example, consider a search engine that constantly updates its database in order to reflect the latest changes to the content of websites. Here, as well, changes can be modeled by random edits to the content of websites. Another area of application is in distributed storage networks where several backup nodes store the same content and  need to be regularly synchronized together. Mis--synchronization in storage devices can be due to mis-synchronized clock speeds of read and write heads of hard drives or crashes in random parts of the hard drive. 

\subsection{Previous Work}
There has been a large body of research on {synchronization} from edits. In  \cite{Varshamov_Tenegolts},  Varshamov and Tenengolts  offered a coding scheme for recovery from one asymmetric error. Soon thereafter,  Levenshtein \cite{Levenshtein} showed that the scheme of Varshamov and Tenengolts can be used for synchronization from one deletion or one insertion. 
In \cite{Orlitsky}, Orlitsky proved several fundamental bounds on the minimum number of transmitted bits under a restricted number of communication rounds for a prescribed edit distance. 
While the results of \cite{Orlitsky} are nonconstructive, several researchers have provided explicit code constructions. Let $n$ denote the length of file $X$. For $\delta$ number of edits, 
Cormode {\em et al.} \cite{Cormode_Paterson_Sahinalp} offered an $\epsilon$-error protocol with $c(\epsilon) \delta \log^3 n$ total transmitted bits\footnote{All logarithms in this paper are in base $2$.},  where $c(\epsilon)$ is a constant that depends on the error $\epsilon$. For the same setting, Evfimievski \cite{Evfimievski} devised a protocol with the number of transmitted bits that is a polynomial in $\log n,\log\frac{1}{\epsilon},$ and $\delta$. For an unknown, fixed  number of edits  $\delta$, Orlitsky and Viswanathan \cite{Orlitsky_Viswanathan} showed that the $\epsilon$-error optimal protocol needs at most $\delta\log n+\log\frac{1}{\epsilon}$ transmitted bits. They also provided an explicit synchronization protocol that needs $2\delta \log n(\log n+\log\log n+\log\frac{1}{\epsilon}+\log \delta)$  {transmitted bits}. More recently, Venkataramanan {\em et al.} \cite{Venkataramanan_Zhang_Ramchandran} offered a synchronization scheme that can correct $\delta=o(\frac{n}{\log n})$ 
edits with $(4c+1){\delta\log n}$ transmitted bits from node $A$ to node $B$ and $10{(\delta-1)}$ transmitted bits from node $B$ to node $A$ for any positive integer $c$.  The error of  reconstruction is   at most  $\frac{d\log n}{n^c}$  where $d$ is the number of deleted bits in $X$, out of $\delta$ total edits.  

In practice, RSYNC \cite{Tridgell} is a popular UNIX application for synchronizing between edited files. The RSYNC method can be in general very inefficient and the number of transmitted bits can be exponentially larger than the optimal number. There have  been many improvements over the baseline approach. For example Suel {\em et al.} \cite{Suel_Noel_Trendafilov} proposed a protocol that in certain cases can save up to $50\%$ of bandwidth over RSYNC. There are also more specialized synchronization tools, such as VSYNC \cite{Zhang_Yeo_Ramachandran}, which synchronizes between video files.    

\subsection{Our Contribution}
While most of the previous work has concentrated on synchronizing from a fixed number of edits between two files $X$ and $Y$, in this paper we are interested in a more practical scenario, which is  synchronizing from a {\em fixed rate} of edits between two files. We only study synchronization from deletions, and will discuss possible extensions to the more general case of deletions and insertions at the end of the paper. More specifically, we consider synchronization between node  $A$ and  node $B$ where node $A$ has a binary string $X$ that is generated by an i.i.d. Bernoulli process of parameter $ \frac{1}{2}$. Node $B$ has a binary string $Y$ that is generated from $X$ by randomly and independently deleting bits of $X$ with probability $\beta$ that is very small. We are interested in an optimal transmission protocol for synchronizing between nodes $A$ and $B$ when $n$, the length of $X$, is large.

{We remark that,  throughout the paper,  by small $\beta$  we implicitly mean that there exists $\beta_0 > 0 $ such that our discussion is valid for all $\beta < \beta_0$. Furthermore, by large $n$ we implicitly mean that for every $\beta < \beta_0$ there exists a positive integer $n_{\beta}$ such that our discussion is valid for all $n > n_{\beta}$.}

In order to evaluate a lower bound on the optimal number of transmitted bits between nodes $A$ and $B$, suppose that node  $A$ has access to string $Y$. Then, the optimal number of transmitted bits to node $B$, needed for reconstructing $X$ is $H(X|Y)$, which is the conditional entropy of string $X$ given string $Y$. Ma {\em et al.} \cite{Ma_Ramachandran_Tse} considered a more general set-up where the deletion pattern follows a stationary Markov chain. By applying the result of  \cite{Ma_Ramachandran_Tse} to our model, for small values of $\beta$, the entropy $H(X|Y)$ can be estimated as follows 
\begin{equation}
H(X|Y) = n(\beta \log \frac{1}{\beta}+O(\beta)). 
\end{equation}
Therefore, any synchronization protocol needs  at least $n(\beta \log \frac{1}{\beta}+O(\beta))$ transmitted bits. Paper \cite{Ma_Ramachandran_Tse} further uses tools from the  well studied problem of  \emph {source coding with side information} \cite{Slepian_Wolf,Wyner_Ziv} to show that there exists a randomized synchronization protocol on a {\em one-way} channel that asymptotically needs $H(X|Y)$ transmitted bits. However, \cite{Ma_Ramachandran_Tse} does not offer any explicit, deterministic construction for the synchronization protocol. We remark that the most efficient previous constructions (e.g., \cite{Venkataramanan_Zhang_Ramchandran}) are for a {\em fixed}  number of edits $\delta$, and require $O(\delta \log n)$ transmitted bits between $A$ and $B$. A na\"{\i}ve application of such results to our setup would require  $O(n\beta \log n)$ transmitted bits between $A$ and $B$ for large $n$,  which is clearly far from being optimal.  

In this paper, we offer the first explicit and deterministic  construction of a protocol for synchronizing from a small rate of deletions on a two-way, error-free channel. The protocol is optimal {within} a constant multiplicative factor and needs $O(n\beta \log\frac{1}{\beta})$ transmitted bits. Furthermore, we demonstrate that the error probability of synchronization at node $B$ is exponentially small in $n$. Finally, we show that our scheme needs a running time that is at most $O(n^4\beta^6)$.   

The rest of the paper is organized as follows. In Section \ref{sec:Problem-Setting-and}, we present the  problem setting and the main result along with a sketch of our synchronization scheme. In Section \ref{sec:Coding-Scheme}, we present the mathematical details of our synchronization protocol and the proof of the main result in the paper. Section \ref{sec:Practical-Algorithms} discusses practical implications of our protocol for low--complexity synchronization algorithms, and Section \ref{conclusion} includes concluding remarks and directions for possible extensions. Preliminary results from this work were reported in~\cite{Sadegh_ISTC}.
\section{\label{sec:Problem-Setting-and}Problem Setting and the Main Result}
\subsection{Preliminaries}
We represent a binary string $Z$ of length $\ell$ by $Z=Z(1),Z(2),\cdots,Z(\ell)$.  For $1\leq i\leq j\leq \ell$, $Z(i,j)$ {denotes} the substring $Z(i),Z(i+1),\cdots,Z(j)$ of $Z.$ If $Z_1$ is a string of length $\ell_1$ and $Z_2$ is a string of length $\ell_2,$ we denote by $Z_1,Z_2$ the string of length $\ell_1+\ell_2$ obtained by concatenation of $Z_1$ and $Z_2.$ For a string $Z$, we let $|Z|$ denote the length of $Z$.

 \emph{Deletion channel} is a channel that may delete any subset of the bits of the input string. Let $X$ be the input to the deletion channel and $Y$ be the output of the channel. We  represent the set of deleted bits from $X$ by a binary  vector $D$ of length $|X|$ which is called the deletion pattern. If the deletion channel has deleted  bit $X(i)$ from $X$, then $D(i)=1$ and otherwise $D(i)=0.$ For example, the output of a deletion channel with input $X=101$ and deletion pattern $D=010$, is $Y=11.$

Corresponding to the deletion pattern $D,$ we define a function $f_{D}$ which maps the indices of bits in the input string, to their corresponding indices in the output string. If for index $i,$ $D(i)=0,$ then $f_{D}(i)=i-\sum_{j<i}D(i)$, and if $D(i)=1$, then $f_{D}(i)=f_{D}(i')$ where $i'$ is the largest index, smaller than $i$, for which $D(i')=0.$ In the example above $f_D(1) = 1, f_D(2)=1$, and $f_D(3) = 2.$
\subsection{The Main Result}
Suppose that {node $A$} contains a file that is represented by a binary string $X$ of length $n.$ Let {node} $B$ contain  file $Y$ of length $m$ that is the output of a deletion channel with input $X$ and deletion pattern $D$. We assume that the deletion pattern is unknown to nodes $A$ and $B$. Suppose that the source file $X$ is generated by an i.i.d. Bernoulli source of parameter $\frac{1}{2}$ and that the deletion channel  has deleted bits of $X$ independently and with probability $\beta\ll 1$. We are interested in a synchronization protocol on a  two-way, error-free channel between nodes $A$ and $B$ so that {node} $B$ can recover string $X$ from string $Y$ with a small probability of error at the end of the communication session. Our main contribution in this paper is proving the following theorem.
\begin{thm}
\label{thm:main-1} There exists a deterministic synchronization protocol between nodes $A$ and $B$ on a two-way, error-free channel, that on average transmits $O(n\beta\log\frac{1}{\beta})$ bits and generates an estimate $\hat{X}=\hat{X}(1),\cdots,\hat{X}(n)$ of $X$ at node $B,$ such that $\Pr\left\{ \hat{X}(i)\neq X(i)\right\} \leq2^{-\Omega(n)}$ for every $1\leq i\leq n$. 
\end{thm}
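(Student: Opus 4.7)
The plan is to reduce global synchronization to a sequence of local per-block synchronizations, exploiting the sharp concentration of the per-block deletion count that follows from the i.i.d.\ structure of the deletion channel. First I would have node $A$ partition $X$ into $N = \Theta(n\beta)$ consecutive blocks of length $L = \Theta(1/\beta)$, so that the number of deletions $\delta_i$ in the $i$-th block is a Binomial$(L,\beta)$ random variable of constant mean. A Chernoff bound yields $\Pr[\delta_i \geq t] \leq 2^{-\Omega(t)}$, and the i.i.d.\ structure ensures $\sum_i \delta_i = n\beta\,(1+o(1))$ with overwhelming probability; likewise the image $f_{D}(iL)$ of the $i$-th boundary lies in a window of width $O(\sqrt{iL\beta})$ around its mean $iL(1-\beta)$.

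The protocol then has two main components. In the \emph{alignment} component, node $A$ transmits, for each of the $N-1$ boundaries, a short fingerprint of a window straddling the boundary (e.g.\ a parity check, or simply a short prefix of the next block). Because once boundary $i-1$ has been pinned down correctly the next boundary is at most $\delta_i = O(1)$ positions away from the naive prediction with overwhelming probability, node $B$ can sweep a short candidate window in $Y$ and pick the unique fingerprint-consistent candidate. In the \emph{per-block synchronization} component, once the substring $Y_i \subseteq Y$ matched to $X_i$ has been identified, the pair $(X_i, Y_i)$---which differs by $\delta_i$ deletions in a block of length $L$---is synchronized by invoking a fixed-edit subroutine (for instance the scheme of Venkataramanan, Zhang, and Ramchandran~\cite{Venkataramanan_Zhang_Ramchandran}) that uses $O(\delta_i \log L)$ bits. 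Summing, the expected total cost is $O(N) + O(\log L)\sum_i \mathbb{E}[\delta_i] = O(n\beta) + O(n\beta \log(1/\beta)) = O(n\beta \log(1/\beta))$, matching the target rate up to a constant.

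To drive the per-bit error probability down to $2^{-\Omega(n)}$, I would add a \emph{verification} round that makes essential use of the two-way channel: node $A$ sends a strong hash (or syndrome) of the whole string $X$, node $B$ checks its candidate reconstruction against this hash, and any disagreement triggers a targeted retransmission---using a longer fingerprint for the disputed region, or in the worst case simply forwarding the raw bits of that region from $A$ to $B$. The expected overhead of this round is negligible since it is triggered only on an exponentially unlikely event, while its presence converts the ``constant-per-block'' alignment guarantee into a global one. The main obstacle I anticipate is preventing a single misaligned boundary from cascading into a macroscopic reconstruction error without either blowing up the alignment budget or escaping the verification round; tuning the block length $L$, the per-boundary fingerprint length, and the two-way retransmission policy jointly to simultaneously achieve the $O(n\beta \log(1/\beta))$ expected bit count, the $2^{-\Omega(n)}$ per-bit error probability, and the claimed polynomial running time is where the bulk of the technical work in Section~\ref{sec:Coding-Scheme} is expected to concentrate.
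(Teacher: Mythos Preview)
Your block decomposition with $L=\Theta(1/\beta)$ and the per-block use of the Venkataramanan--Zhang--Ramchandran protocol match the paper's scheme. The genuine gap is exactly the one you flag at the end: the \emph{sequential} alignment step cannot be made to work within the bit budget. Keeping the total fingerprint cost at $O(n\beta\log\tfrac{1}{\beta})$ forces fingerprint length $\ell=O(\log\tfrac{1}{\beta})$, so the probability of a spurious match at any fixed position is $2^{-\ell}=\beta^{O(1)}$, a quantity independent of $n$. Across $\Theta(n\beta)$ boundaries this is not summable, and a single bad alignment in your left-to-right scheme corrupts every downstream boundary. Consequently the first-round reconstruction is globally wrong with probability $1-o(1)$ as $n\to\infty$, the verification hash fails almost surely, and the ``targeted retransmission'' must cover a macroscopic region, pushing the expected cost to $\Theta(n)$ rather than $O(n\beta\log\tfrac{1}{\beta})$. (Note also that $\delta_i$ is not $O(1)$ with overwhelming probability in any useful sense: $\Pr[\delta_i>C]$ is a positive constant for every fixed $C$, so a $\Theta(1)$ fraction of your constant-width windows miss the true boundary outright.)

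The paper avoids cascading by replacing sequential alignment with a \emph{global} matching. Node $A$ sends the pivot substrings $P_i$ of length $O(\log\tfrac{1}{\beta})$; node $B$ records \emph{every} occurrence of every $P_i$ in $Y$ as a vertex in a layered graph $G$, with an edge between $u\in\Lambda_i$ and $v\in\Lambda_j$ whenever the gap $\check v-\hat u-1$ is consistent with the spacing of $P_i,P_j$ in $X$. The main technical result (Theorem~\ref{thm:For-a-random}) shows via a union bound over path shapes that, with probability $1-2^{-\Omega(n)}$, \emph{every} $s$--$t$ path of the right length in $G$ has at most a $\beta$-fraction of spurious-match vertices; so finding any such path (a longest-path computation in a DAG) yields a boundary set in which errors are isolated, not propagated. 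The second ingredient your proposal does not develop is the cleanup: after per-segment deletion recovery, the estimate $\tilde X$ has the \emph{same length} as $X$ and differs from it in an $O(\beta)$ fraction of positions, so node $A$ sends $nH(O(\beta))=O(n\beta\log\tfrac{1}{\beta})$ systematic LDPC parity bits and node $B$ decodes as over a BSC, which is what drives the per-bit error to $2^{-\Omega(n)}$. A hash-and-retransmit round cannot substitute for this, because a hash short enough to fit the budget cannot localize the error, and (as argued above) the event it guards against is not rare.
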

We prove the theorem by explicitly constructing a synchronization protocol. Next, we provide an overview of our synchronization protocol and prove its optimality. 
\subsection{Synchronization Protocol}
Recall that node $B$ has string $Y$ which is a deleted version of string $X$. We next explain a synchronization protocol that enables node $B$ to reconstruct an estimate of string $X$ with a small probability of error. 
The synchronization protocol has three main steps, as illustrated in Figure \ref{scheme}. Each step is performed by a module at node $B$ that has a two-way communication link to node $A$. The three modules work in series, such that the input to the first module is string $Y$ and the output of the last module is the estimate $\hat{X}$ of string $X$. 
\begin{figure*}
\centering
\includegraphics[scale=0.35]{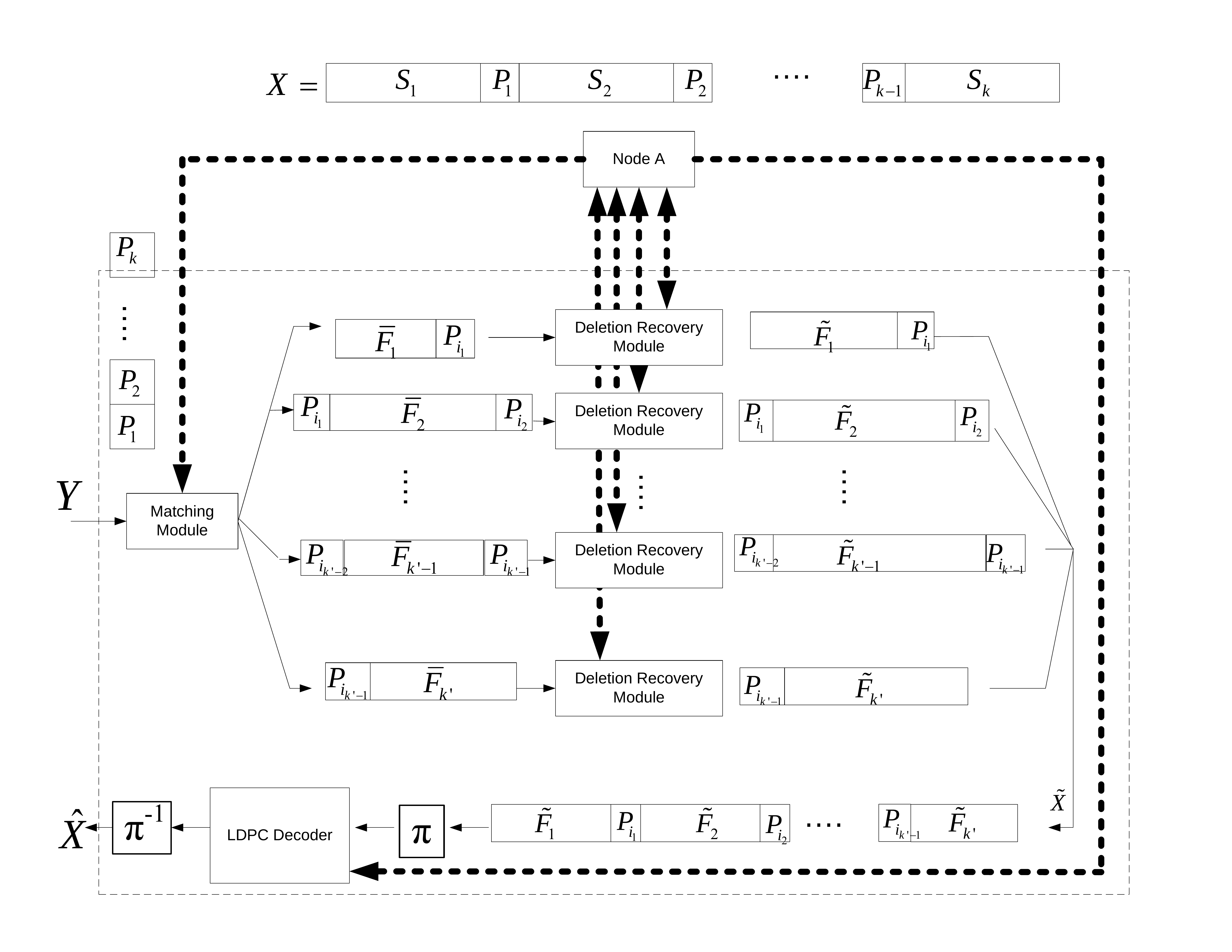}
\caption{\label{scheme}Illustration of the synchronization protocol.}
\end{figure*}
 Suppose that $X$ is partitioned into substrings as follows 
\[X=S_{1},P_{1},S_{2},P_{2},\cdots,S_{k-1},P_{k-1},S_{k},\]
 where $|P_{i}|=L_{P}$ and $|S_{i}|=L_{S}$. Substrings $P_{1},\cdots,P_{k-1}$ are called \emph{pivot strings} and substrings $S_{1},\cdots,S_{k}$ are called \emph{segment strings.} We choose $L_{S}=\frac{1}{\beta}$ and $L_{P}=O(\log\frac{1}{\beta})$ and both node $A$ and node $B$ know the exact values of $L_S$ and $L_P$. {Note that the length of a pivot string is much smaller than the length of a segment string.} We will determine the exact value of $L_P$ later during our analysis.
\begin{enumerate}
\item The first step of the synchronization protocol is performed by the \emph{matching module} at node $B$. In this step,  {node} $A$ sends pivot strings $P_i$, $1 \leq i \leq k-1$, in sequential order to {node} $B$. Upon receiving { all} the pivots, the matching module attempts to figure out the positions of pivots in $Y$ by finding the {\em exact copies} of $P_i$'s within $Y$. { The matching module is responsible for resolving ambiguities when there are multiple copies of a pivot in $Y$. The structure of the matching module and the graph-based algorithm for resolving the ambiguities are discussed in Section \ref{sec:Coding-Scheme}.} 
Due to {possible deletions} within $P_i$'s, the matching module is able to find the exact matches for only a subset of $P_i$'s. {We will explain later the other possible cases when there are multiple matches for a pivot but an error is made by detecting a match that is not due to the original pivot.}  Suppose that the matching module finds matches for $P_{i_1},\cdots,P_{i_{k'-1}}$ where $k'\leq k$.  Based on the position of matched $P_i$'s, the matching module partitions $Y$ into substrings as 
$$Y=\bar{F}_{1},P_{i_{1}},\bar{F}_{2},P_{i_{2}},\cdots,\bar{F}_{k'-1},P_{i_{k'-1}},\bar{F}_{k'},$$
 and sends this partitioned string to the next module,
where $\bar{F}_{j}$ denotes the substring between matched  pivots $P_{i_{j-1}}$ and $P_{i_{j}}$ in $Y$.
\item The next step is performed by the \emph{deletion recovery module} at node $B$. After receiving the partitioned $Y$ from the matching module, the deletion recovery module sends the indices $\{i_1,\cdots,i_{k'-1}\}$ of the matched pivots in $Y$ to node $A$. Upon receiving the indices, node $A$ partitions $X$ into substrings as follows:  
\begin{equation}
X=F_{1},P_{i_{1}},F_{2},P_{i_{2}},\cdots,F_{k'-1},P_{i_{k'-1}},F_{k'},
\end{equation}
where ${F}_{j}$ denotes the substring between pivots $P_{i_{j-1}}$ and $P_{i_{j}}$ in $X$. Substring $F_j$  can be written as follows:
\[{F}_{j}=S_{i_{j-1}+1},P_{i_{j-1}+1},\cdots,P_{i_{j}-1},S_{i_{j}}.\]
Notice that if $P_{i_{j-1}}$ and $P_{i_{j}}$ are matched correctly in $Y$, then $\bar{F}_j$ can be derived from $F_j$ by some {sequence} of deletions. 
In this step, nodes $A$ and $B$ use the synchronization protocol of Venkataramanan {\em et al.}, \cite{Venkataramanan_Zhang_Ramchandran} with parameter $c=3$  ($c$ is a parameter that defines the tradeoff between complexity of the protocol and the error in the output of the decoder) to recover from deleted bits of ${\bar F}_j$ and to form an estimate of $F_j$  for each $1\leq j\leq k$. Let us denote by $\tilde{F}_j$ the estimate of $F_j$ at the output of the deletion recovery module. Notice that $\tilde{F}_j$ has the same length as $F_j$. At the end of this step, the deletion recovery module forwards the string 
\begin{equation}
\tilde{X}=\tilde{F}_{1},P_{i_{1}},\tilde{F}_{2},P_{i_{2}},\cdots,\tilde{F}_{k'-1},P_{i_{k'-1}},\tilde{F}_{k'}
\end{equation}  
as an estimate of $X$ to the last module.
\item  At the last step, the  {\em LDPC decoder module} at node $B$, recovers from the errors made by the first two steps. Due to a potential existence of multiple copies of each $P_i$ within $Y$, the matching module (first step) may erroneously match $P_i$ at a wrong place. Suppose $P_{i_j}$ is a pivot that the matching module has {matched at a wrong place}. Then, $\bar{F}_j$ and $\bar{F}_{j+1}$ may not be realizable by deleting subsets of bits from $F_j$ and $F_{j+1}$ respectively. As a result, after the  deletion recovery module (second step), $\tilde{F}_j$ and $\tilde{F}_{j+1}$ may be different from $F_j$ and $F_{j+1}$,
respectively. Furthermore, even if the matching module has matched pivots $P_{i_{j-1}}$ and $P_{i_j}$ correctly in $Y$ and $\bar{F}_j$ is a deleted version of ${F}_{j}$, the protocol of Venkataramanan {\em et al.} \cite{Venkataramanan_Zhang_Ramchandran}, used in deletion recovery module, {could introduce additional errors.}

Suppose that the total error of the first two synchronization modules  is bounded by $\zeta$,  
\[\Pr\left\{\tilde{F}_j \neq F_j\right\} \leq \zeta.\]      
We notice that the output of the deletion recovery module, $\tilde{X}$, is in synchronization with $X$, in the sense that  $|\tilde{F}_j|= |F_j|$ for each $1\leq j\leq k'$ and hence ${\tilde X}(i)$ is the estimate of $X(i)$ for each index $1\leq i \leq n$. Since the  error rate over substrings $\tilde{F}_j$, $1\leq j\leq k'$, is an upper bound for the bit error rate over $X$, we find that 
\begin{equation}
\label{bsc_error}
\Pr\left\{\tilde{X}(i) \neq X(i)\right\}\leq \zeta.
\end{equation}
To recover from errors of $\tilde{X}$ we  use {a powerful additive-error correction code}. Our choice is an LDPC decoder which receives parity check bits of a systematic LDPC code \cite{Richardson_Urbanke_encoding}. { By applying a random permutation $\pi$ at the input of the LDPC decoder and its inverse permutation $\pi^{-1}$ at the output of the decoder, we can eliminate a potential non-uniformity of errors over different bits of $\tilde{X}$. 
Therefore, by using the error bound given in  \eqref{bsc_error},  the input sequence to the  LDPC decoder can be modeled as an output of a Binary Symmetric Channel (BSC) 
with a Bernoulli i.i.d. input sequence of parameter $\frac{1}{2}$ and
with the crossover probability of at most $\zeta$. We assume that node $A$ has access to  the permutation $\pi$.}

 If node $A$ sends a sufficient number of parity check bits to the LDPC decoder module, as shown in \cite{Richardson_Urbanke}, the output of the decoder will be a string $\hat {X}$ with
\[\Pr\left\{\hat{X}(i)\neq X(i)\right\} \leq 2^{-\Omega(n)},\]
{as previously stated in Theorem \ref{thm:main-1}.}
\end{enumerate}
Next, we wish to estimate the total number of transmitted bits {used} by our synchronization protocol. {We first establish} a measure of the performance of the matching module of the decoder.  
\begin{thm}
\label{matching_theorem}
Let $k'=(1-L_P\beta +2\beta + o(\beta))k$. For $L_P\geq 11+2\log\frac{1}{\beta}$, there exists a matching module  that with probability $1-2^{-\Omega(n)}$, matches a subset $\{P_{i_1},\cdots,P_{i_{k'-1}}\}$ of pivots $\{P_1,\cdots,P_{k-1}\}$ such that the {probability of} error  in matching  $P_{i_j}$ is  at most  $\beta + o(\beta)$.
\end{thm}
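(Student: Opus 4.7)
The plan is to construct the matching module as a two-phase procedure and analyze its output via concentration arguments together with a careful case analysis for the per-pivot error. In the first phase, for every pivot $P_i$ we enumerate the set $M_i$ of starting positions in $Y$ at which $P_i$ occurs as a contiguous substring; this is a purely combinatorial scan. In the second phase, a graph-based disambiguation procedure (the one to be detailed in Section~\ref{sec:Coding-Scheme}) selects, among the candidate matches, a monotonically increasing sequence of positions whose consecutive gaps are close to the nominal spacing $L_S + L_P$, and discards every pivot for which no unique preferred candidate can be identified.

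For the size of the matched set, let $E_i$ be the event that no bit of $P_i$ is deleted. Then $\Pr[E_i] = (1-\beta)^{L_P} = 1 - L_P\beta + O((L_P\beta)^2)$, which is $1 - L_P\beta + o(\beta)$ under $L_P = O(\log(1/\beta))$, and the events $\{E_i\}$ are mutually independent across pivots. A Chernoff bound then shows that the number of intact pivots concentrates around $(1-L_P\beta)k$ to within $o(\beta k)$, except with probability $2^{-\Omega(n\beta)} \le 2^{-\Omega(n)}$. Among the intact pivots, we next control the fraction that remain ambiguous, i.e., those that possess a second exact copy of $P_i$ somewhere in $Y$. Because $X$ is i.i.d. Bernoulli($1/2$), any fixed length-$L_P$ window of $Y$ matches $P_i$ with probability $2^{-L_P} \le 2^{-11}\beta^2$; a union bound over the $O(n)$ possible positions and $k$ pivots, followed by another Chernoff concentration, shows that all but a $o(\beta)k$ fraction of the intact pivots have a unique match and are retained by the disambiguator. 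Combining the two concentration steps yields a matched-set size $k'-1 = (1 - L_P\beta + 2\beta + o(\beta))k - 1$, where the $+2\beta$ slack absorbs both the lower-order Chernoff deviations and the pivots that are accepted though erroneously matched.

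The delicate part is the per-pivot error probability. An error in matching $P_{i_j}$ means that the disambiguator picked a position different from $f_D$ applied to the starting index of $P_{i_j}$ in $X$. I would split this event into two cases. In case (a), $E_{i_j}$ holds but an alternative exact copy of $P_{i_j}$ lies in the graph's search window and is preferred by the algorithm; since the window contains $O(L_S) = O(1/\beta)$ candidate positions, each matching $P_{i_j}$ with probability $2^{-L_P}$, a union bound gives $O(L_S 2^{-L_P}) = O(2^{-11}\beta) = o(\beta)$. In case (b), $E_{i_j}$ fails (probability at most $L_P\beta$) yet the disambiguator still accepts the pivot because some exact copy of $P_{i_j}$ happens to lie inside its search window; combining the probabilities of the deletion pattern inside $P_{i_j}$ with the random-string probability $2^{-L_P}$ again gives $o(\beta)$. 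The residual contribution at order $\beta$ comes from the neighbor-propagation effect: a misaligned nearby pivot can shift the graph's preferred candidate for $P_{i_j}$ by a few positions, and one such event is triggered by a single deletion inside an adjacent pivot, contributing at most $\beta$ to the per-pivot error and giving the overall bound $\beta + o(\beta)$.

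The main obstacle is this quantitative analysis of the disambiguator in cases (b) and the neighbor-propagation residual, because the pivots interact through the order constraint encoded in the graph. The argument hinges on three ingredients: the i.i.d. Bernoulli law of $X$, which makes coincidental exact matches of length $L_P \ge 11 + 2\log(1/\beta)$ rare; the discarding rule of the disambiguator, which removes any pivot simultaneously supporting two consistent candidates; and the near-independence of nonoverlapping pivot neighborhoods, which decouples the per-pivot error events and lets the analysis proceed pivot-by-pivot. Once these ingredients are combined with the Chernoff concentrations from the second paragraph, the claimed $1-2^{-\Omega(n)}$ bound on the size of the matched set and the $\beta + o(\beta)$ bound on the per-pivot error both follow.
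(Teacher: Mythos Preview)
Your proposal has two genuine gaps that the paper's argument is specifically structured to avoid.

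First, the $+2\beta$ term in $k'=(1-L_P\beta+2\beta+o(\beta))k$ is not derived. Your count produces only $(1-L_P\beta+o(\beta))k$ intact pivots, and the claim that the extra $+2\beta$ ``absorbs both the lower-order Chernoff deviations and the pivots that are accepted though erroneously matched'' is not a derivation; it leaves you $2\beta k$ matches short of the target. In the paper the $+2\beta$ comes from a concrete mechanism you do not mention: a pivot with exactly one deletion still admits a \emph{correct} match whenever the deleted bit lies in the first or last run of $P_i$ and the adjacent undeleted bit of $X$ has the same value, which happens with probability $2\beta+o(\beta)$ (Lemma~\ref{one_deletion_lemma}). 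Without counting these one-deletion pivots, your module cannot match $k'-1$ pivots as stated.

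Second, and more fundamentally, your per-pivot error analysis assumes a local search window of size $O(L_S)$, but you never justify how that window is anchored. If it is anchored at the nominal position $i(L_S+L_P)$ in $Y$, the accumulated deletions before pivot $i$ shift the true position by $\Theta(i)$, far outside the window. If it is anchored at the previously matched pivot, then one wrong match at step $j-1$ displaces the window for step $j$, and errors can cascade through the monotone-gap constraint; your ``neighbor-propagation'' paragraph only accounts for a single deletion in one adjacent pivot and does not rule out this cascade. The paper's proof sidesteps this entirely by working globally rather than per pivot: it shows (Theorem~\ref{thm:For-a-random}) that with probability $1-2^{-\Omega(n)}$, \emph{every} $s$--$t$ path of length $Rk+o(\beta)k$ in the matching graph contains at most $\beta k$ bad vertices. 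The argument is a union bound over paths: for each $\alpha\in[\beta,1]$ one counts the ways to choose the layers carrying bad vertices, bounds the number of placements of those bad vertices via an integer-composition count on the slack variables $\delta_j=\sum$(distance defects), multiplies by $2^{-L_P\alpha k}$ for the coincidental matches, and then integrates the resulting exponent over $\alpha$; the choice $L_P\ge 11+2\log\frac{1}{\beta}$ is exactly what makes this exponent negative for all $\alpha\ge\beta$. Because the conclusion holds for every long path simultaneously, the module may output any such path, and the per-pivot error $\beta k/k'=\beta+o(\beta)$ follows with no local-window or independence assumptions.
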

We devote Section \ref{sec:Coding-Scheme} to proving this theorem. For the rest of our argument we set $L_P = 11+2\log \frac{1}{\beta}$, which is  the minimum value of $L_P$ required by Theorem \ref{matching_theorem}. 

Next, we use Theorem \ref{matching_theorem} to estimate the total number of transmitted bits needed by the synchronization protocol.
\begin{lm}
On average, the total number of transmitted bits of the {synchronization} protocol is no more than $109n\beta \log \frac{1}{\beta}$.
\end{lm}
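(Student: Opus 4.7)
The plan is to decompose the average number of transmitted bits into the contributions of the three modules of the synchronization protocol, bound each by $C_i n\beta\log\frac{1}{\beta}$ plus lower-order terms, and verify that $C_1+C_2+C_3 \leq 109$. I would track leading constants carefully but absorb the $O(n\beta)$ and $o(n\beta\log\frac{1}{\beta})$ terms into the main bound using the small-$\beta$, large-$n$ regime.

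First I would count the bits used by the matching module. Since $X=S_1P_1\cdots P_{k-1}S_k$ with $|S_i|=L_S=\frac{1}{\beta}$ and $|P_i|=L_P=11+2\log\frac{1}{\beta}$, we have $k(L_S+L_P)=n+L_P$, so $k-1\leq n\beta$. Node $A$ thus transmits at most $(k-1)L_P \leq n\beta L_P = n\beta\bigl(11+2\log\frac{1}{\beta}\bigr)$ pivot bits, contributing the leading constant $2$. For the pivot-index information sent from $B$ to $A$ at the start of the deletion recovery stage, node $B$ can use a single indicator bit per pivot of $X$ (matched or not), which costs at most $k-1\leq n\beta$ bits and is therefore $o\bigl(n\beta\log\frac{1}{\beta}\bigr)$.

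Next I would handle the Venkataramanan et al.\ protocol (with $c=3$), which I expect to be the main obstacle. For each pair $(F_j,\bar F_j)$ it sends $(4c+1)\delta_j\log|F_j|+10(\delta_j-1) = 13\delta_j\log|F_j|+10\delta_j+O(1)$ bits, where $\delta_j=|F_j|-|\bar F_j|$. Conditioning on the matching output, deletions inside each $F_j$ remain Bernoulli$(\beta)$, so $E[\delta_j\mid|F_j|]\leq \beta|F_j|$, giving an expected cost of $13\beta\,E\bigl[\sum_j|F_j|\log|F_j|\bigr]+O(n\beta)$. The technical point is to control $\sum_j|F_j|\log|F_j|$: by Theorem~\ref{matching_theorem}, $k-k'=O(L_P\beta\, k)$ with exponentially small failure probability, and the number of original sub-intervals inside $F_j$, namely $\ell_j=i_j-i_{j-1}$, behaves like a geometric random variable with parameter $1-L_P\beta$, so that the tail $\Pr(\ell_j\geq r)\leq (L_P\beta)^{r-1}$ combined with a union bound over $j\leq k$ forces $\max_j|F_j|=O(L_S)$ with overwhelming probability. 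Using $|F_j|=\ell_jL_S+(\ell_j-1)L_P$, this yields
\begin{equation*}
E\Bigl[\sum_j|F_j|\log|F_j|\Bigr]\leq L_S\log L_S\cdot k+O(nL_P\beta)= n\log\tfrac{1}{\beta}+o\bigl(n\log\tfrac{1}{\beta}\bigr),
\end{equation*}
so the Venkataramanan contribution is at most $13 n\beta\log\frac{1}{\beta}$ to leading order. The $10\sum_j\delta_j$ term is $10n\beta=o(n\beta\log\frac{1}{\beta})$.

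Finally I would handle the LDPC parity bits. As explained in the body text, the effective bit-error rate $\zeta$ fed into the decoder satisfies $\zeta\leq 2\beta+o(\beta)$: by Theorem~\ref{matching_theorem} each of the two pivots neighboring $\tilde F_j$ is mismatched with probability at most $\beta+o(\beta)$, and the Venkataramanan failure probability is $O(n^{-2})$ with $c=3$. Invoking the capacity-achieving LDPC construction of \cite{Richardson_Urbanke} together with the permutation trick that symmetrizes the error pattern, node $A$ transmits at most $nH(\zeta)(1+o_\beta(1)) \leq 2n\beta\log\frac{1}{\beta}+O(n\beta)$ parity bits. Summing the three contributions and absorbing all $O(n\beta)$ and $o(n\beta\log\frac{1}{\beta})$ terms into the leading term (valid for all $\beta<\beta_0$ and $n>n_\beta$), the total average cost is bounded by $109\,n\beta\log\frac{1}{\beta}$, with the constant being loose enough to accommodate the Richardson--Urbanke rate gap, the $o(\beta)$ slack in $\zeta$, and the overheads of the pivot and index transmissions.
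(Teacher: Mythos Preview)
Your three-module decomposition and the bounds for the matching step ($2n\beta\log\frac{1}{\beta}$) and the LDPC step ($2n\beta\log\frac{1}{\beta}$) match the paper's proof. The divergence is in the deletion-recovery step, where you aim for the leading constant $13$ while the paper obtains $104$; here your argument has a genuine gap.

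The assertion that a union bound over $j\leq k$ forces $\max_j|F_j|=O(L_S)$ with overwhelming probability is false in the regime of the paper ($\beta$ fixed small, then $n\to\infty$). With $k\approx n\beta$ gaps and $\Pr(\ell_j\geq r)\leq(L_P\beta)^{r-1}$, the union bound $k(L_P\beta)^{r-1}$ is small only once $r\gtrsim 1+\log(n\beta)/\log\frac{1}{L_P\beta}$, so typically $\max_j\ell_j$ grows like $\log n/\log\frac{1}{\beta}$ and hence $\max_j|F_j|\sim L_S\log n/\log\frac{1}{\beta}$, not $O(L_S)$. Feeding this into $\sum_j|F_j|\log|F_j|\leq n\log(\max_j|F_j|)$ produces an extra $n\log\log n$ term, i.e.\ an $O(n\beta\log\log n)$ contribution to the bit count; for fixed $\beta$ this eventually exceeds any constant multiple of $n\beta\log\frac{1}{\beta}$, so the $\max$-route cannot deliver the lemma's bound.

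The paper avoids this by computing the per-block expectation directly (Appendix~I): using that $\ell_j=i_j-i_{j-1}$ is geometric with parameter $p=1-L_P\beta+2\beta+o(\beta)$, it bounds $\mathbb{E}[\delta_j\log|F_j|]=\beta\,\mathbb{E}[|F_j|\log|F_j|]\leq 16+8\log\frac{1}{\beta}$ and then sums over $k'\leq n\beta$ blocks to get $104\,n\beta\log\frac{1}{\beta}$ for this step. Your sharper target of $13$ is in fact reachable by that same per-block expectation route with tighter constants (since $\mathbb{E}[\ell_j]=1+O(L_P\beta)$ and $\mathbb{E}[\ell_j\log\ell_j]=O(L_P\beta)$, giving $\mathbb{E}[|F_j|\log|F_j|]=L_S\log L_S\,(1+o(1))$), but it requires the expectation calculation, not the uniform-in-$j$ maximum.

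One small separate slip: the Venkataramanan error for a single block $F_j$ is $\delta_j\log|F_j|/|F_j|^3$ with $|F_j|\approx L_S=1/\beta$, i.e.\ $O(\beta^3\log\frac{1}{\beta})$; it does not depend on the global length $n$, so ``$O(n^{-2})$'' is a misstatement. Your conclusion that this error is $o(\beta)$ is unaffected.
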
    
\begin{proof}
First notice that $k=\frac{n+L_P}{L_S+L_P} =  n\beta + 11 \beta + 2\beta \log \frac{1}{\beta} = n\beta + o(1).$ 
The number of transmitted bits in the first step is \[(k-1)L_P =2n\beta \log\frac{1}{\beta}+o(n\beta \log\frac{1}{\beta}).\]
At the second step, node $B$ needs no more than $k = n\beta$ bits to transmit the indices $\{i_1,\cdots, i_{k'-1}\}$ to node $A$. 
Furthermore, the protocol of Venkataramanan  {\em et al.} \cite{Venkataramanan_Zhang_Ramchandran}
for the recovery from deletions within each $F_j,1 \leq j \leq k'$, with parameter $c=3$  needs $13\delta_j \log |F_j|+10(\delta_j -1)$ transmitted bits, where $\delta_j := |F_j|-|\tilde{F}_j|$ is 
the number of deleted bits in $F_j$.  Therefore, the average number of transmitted bits in the second step is no more than
 \[n\beta+
\mathbb{E}\left[\sum_{j=1}^{k'}\left(13\delta_{j}\log|F_{j}|+10\delta_{j}\right)\right].\]
 Notice that $\sum_{j=1}^{k'}\delta_{j}$ is the total number of deleted
bits from $X$ and is on average $n\beta$ {(recall that we assumed that no deletions occurred in the matched pivots).}

In 
Appendix I we show that $\mathbb{E}\left[\delta_{j}\log|F_{j}|\right]\leq16+8\log\frac{1}{\beta}.$
Therefore, the average number of transmitted bits in the deletion recovery module is upper bounded
by \begin{align*}
& n\beta+k' \cdot 13(16+8\log\frac{1}{\beta})+10n\beta \\ \leq & n\beta \cdot 13(16+8\log\frac{1}{\beta})+11n\beta
 \\ = & 104 n\beta\log\frac{1}{\beta} + o(n\beta\log\frac{1}{\beta}),\end{align*}
where we used the inequality $k'\leq k = n\beta + o(1).$ 

For the last step, we would like to estimate the error $\zeta$ in $\tilde{F}_j$. By Theorem \ref{matching_theorem}, the error {probability} in matching $P_{i_{j-1}}$ and $P_{i_{j}}$  is 
at most $\beta + o(\beta)$ {each}. Since $\bar{F}_j$ is the common neighbor of $P_{i_{j-1}}$ and $P_{i_{j}}$, with probability at most $2\beta + o(\beta)$, the string $\bar{F}_j$ is not a deleted version of $F_j$. Furthermore, the 
error in the protocol of Venkataramanan {\em et al.} \cite{Venkataramanan_Zhang_Ramchandran} for $c=3$, is upper bounded by $\frac{\delta_j \log|F_j|}{|F_j|^3}$. Since ${\mathbb E} \left[\delta_j\right]=\beta L_S=1$ and also $|F_j|=L_S=\frac{1}{\beta}$,
the average probability of error by the protocol of Venkataramanan {\em et al.},  is {upper}bounded by $\beta^3 \log\frac{1}{\beta}=o(\beta)$. 
Counting the error from the matching module, we have $\Pr\left\{\tilde{F}_j \neq F_j\right\}\leq 2\beta+o(\beta)$, and therefore $\Pr\left\{\tilde{X}(i) \neq X(i)\right\}\leq 2\beta+o(\beta)$. 

In order to recover from errors induced by a BSC with crossover probability {of at most}  $2\beta+o(\beta)$, node $A$ needs to send \[nH(2\beta+o(\beta))= 2n\beta \log \frac{1}{\beta} + o(n\beta \log\frac{1}{\beta}),
 \] parity check bits to node $B$,  where we use $H(\cdot)$ to refer to the binary entropy function defined as $H(t)=t\log \frac{1}{t} + (1-t) \log \frac{1}{1-t}$ for $0<t<1$. 

The average number of transmitted bits in all three steps of the protocol is { upper bounded by $108 n\beta\log\frac{1}{\beta} + o(n\beta\log\frac{1}{\beta}) <109 n\beta\log\frac{1}{\beta}$.}  Therefore, the average number of transmitted bits by the algorithm is { no more than $109 n\beta\log\frac{1}{\beta}.$}  
\end{proof}    
In the next section we prove Theorem \ref{matching_theorem}.
\section{\label{sec:Coding-Scheme}Proof of Theorem \ref{matching_theorem}}
In this section,  we  propose a construction of a matching module such that for $L_P \geq 11+2\log\frac{1}{\beta}$, with probability $1-2^{-\Omega(n)}$,  the module matches $k'$ pivots,   out of  which at most $\beta k$ pivots are matched erroneously. Since ${\beta k}=(\beta+o(\beta))k',$ our construction implies an error of at most $\beta+o(\beta)$  in matching the pivots. This claim  is equivalent to the statement of  Theorem \ref{matching_theorem}.
   
We will frequently use the following concentration theorem  in our argument:
\begin{thm}[Hoeffding \cite{Hoeffding}]
\label{Hoeffding}
Let $p_{0}$ be the probability that a biased coin shows heads. Then for every $\varepsilon >0,$
the probability that $N$ tosses of the coin yield a number of heads between $(p_{0}-\varepsilon)N$
and $(p_{0}+\varepsilon)N$ is at least $1-2e^{-2\varepsilon^{2}N}.$ 
\end{thm}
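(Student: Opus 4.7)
The plan is to prove this by the standard Chernoff--Cram\'er exponential moment method, followed by a union bound over the two tails. Let $X_1,\dots,X_N$ be i.i.d.\ Bernoulli$(p_0)$ indicators of the $N$ tosses, let $S_N=\sum_{i=1}^N X_i$ denote the number of heads, and write $Y_i = X_i - p_0$ so that $\mathbb{E}[Y_i]=0$ and $Y_i \in [-p_0,\,1-p_0]$, an interval of length $1$. For any $t>0$, Markov's inequality applied to the nonnegative random variable $e^{t\sum_i Y_i}$ yields
\begin{equation*}
\Pr\{S_N \geq (p_0+\varepsilon)N\} \;\leq\; e^{-t\varepsilon N}\prod_{i=1}^N \mathbb{E}\!\left[e^{tY_i}\right].
\end{equation*}
Thus everything reduces to bounding the moment generating function of a bounded, mean-zero random variable, and then optimizing over $t$.

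The key technical step I would carry out is \emph{Hoeffding's lemma}: for any random variable $Y$ with $\mathbb{E}[Y]=0$ and $Y\in[a,b]$, one has $\mathbb{E}[e^{tY}]\leq e^{t^2(b-a)^2/8}$. The route is to use convexity of $e^{tx}$ on $[a,b]$ to write
\begin{equation*}
e^{tx} \;\leq\; \frac{b-x}{b-a}\,e^{ta}+\frac{x-a}{b-a}\,e^{tb},
\end{equation*}
take expectations (using $\mathbb{E}[Y]=0$), and reduce to controlling $\psi(u):=-\theta u+\ln(1-\theta+\theta e^u)$ where $\theta=-a/(b-a)$ and $u=t(b-a)$. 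A short calculation shows $\psi(0)=\psi'(0)=0$ and $\psi''(u)\leq \tfrac{1}{4}$ uniformly, so Taylor's theorem gives $\psi(u)\leq u^2/8$, which is exactly the claimed MGF bound. In our setting $b-a=1$, so $\mathbb{E}[e^{tY_i}]\leq e^{t^2/8}$.

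Plugging this back, I would get $\Pr\{S_N \geq (p_0+\varepsilon)N\}\leq \exp(-t\varepsilon N + Nt^2/8)$, minimize the exponent by choosing $t=4\varepsilon$, and obtain the one-sided bound $\exp(-2\varepsilon^2 N)$. The lower tail $\Pr\{S_N \leq (p_0-\varepsilon)N\}$ is handled identically by applying the same argument to the variables $-Y_i$, which are also mean zero and supported in an interval of length $1$. A union bound over the two tails yields the claimed probability $1-2e^{-2\varepsilon^2 N}$ that $S_N/N \in [p_0-\varepsilon,\,p_0+\varepsilon]$.

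The main obstacle is the proof of Hoeffding's lemma, specifically the uniform bound $\psi''(u)\leq 1/4$, which is what produces the clean exponent $2\varepsilon^2 N$ rather than some weaker variance-style bound. Everything else --- the Markov step, the tensorization of the MGF across independent tosses, the optimization in $t$, and the symmetrization for the lower tail --- is routine once that lemma is in hand.
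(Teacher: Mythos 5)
Your proof is correct and is the standard Chernoff--Hoeffding argument (Markov's inequality on the exponential moment, Hoeffding's lemma via convexity and the bound $\psi''\leq\tfrac14$, optimization at $t=4\varepsilon$, and a union bound over the two tails), which is exactly the argument in the reference \cite{Hoeffding} that the paper cites; the paper itself imports this theorem without proof. No gaps.
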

We will occasionally need a stronger version of the previous theorem:
\begin{thm}[Hoeffding \cite{Hoeffding}]
\label{Hoeffding_2}
Let $z_{1},\cdots,z_{N}$ be i.i.d. random variables with expected
value $M$ that take values in an interval of length $I.$ Then, for every $\varepsilon > 0$, the
following holds\[
\Pr\left\{\left|\sum_{i=1}^{N}z_{i}-NM\right|\geq\varepsilon N\right\}\leq2\exp \left(-\frac{2\varepsilon^{2}N}{I^{2}}\right).\]
\end{thm}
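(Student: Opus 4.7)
The plan is to prove this by the standard Chernoff (exponential moment) method, reducing the two-sided deviation to a one-sided one by symmetry and then applying Hoeffding's lemma. Set $Y_i := z_i - M$, so the $Y_i$ are i.i.d., have mean zero, and take values in an interval of the same length $I$. Let $S_N := \sum_{i=1}^N Y_i = \sum_{i=1}^N z_i - NM$. A union bound gives
\[
\Pr\{|S_N| \geq \varepsilon N\} \leq \Pr\{S_N \geq \varepsilon N\} + \Pr\{-S_N \geq \varepsilon N\},
\]
and since $-Y_i$ has the same mean and range length as $Y_i$, it suffices to bound $\Pr\{S_N \geq \varepsilon N\}$ by $\exp(-2\varepsilon^2 N/I^2)$; the other tail then follows by the identical argument applied to $-z_i$, and the factor $2$ in the stated bound absorbs the union bound.

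For the upper tail, I would introduce a free parameter $t>0$ and use Markov's inequality on $e^{tS_N}$, combined with independence:
\[
\Pr\{S_N \geq \varepsilon N\} \leq e^{-t \varepsilon N}\, \mathbb{E}\!\left[e^{t S_N}\right] = e^{-t\varepsilon N}\prod_{i=1}^N \mathbb{E}\!\left[e^{tY_i}\right].
\]
The core of the argument is then Hoeffding's lemma: for any zero-mean random variable $Y$ supported in an interval $[a,b]$ with $b-a = I$, one has $\mathbb{E}[e^{tY}] \leq \exp(t^2 I^2 / 8)$. This is where I expect the main obstacle to lie, since everything else is bookkeeping. I would prove the lemma by convexity: because $y \mapsto e^{ty}$ is convex, for $y \in [a,b]$ we have $e^{ty} \leq \tfrac{b-y}{b-a} e^{ta} + \tfrac{y-a}{b-a} e^{tb}$, and taking expectations using $\mathbb{E}[Y]=0$ yields $\mathbb{E}[e^{tY}] \leq e^{\varphi(s)}$ where $s := t(b-a) = tI$ and
\[
\varphi(s) = \log\!\left(\tfrac{b}{b-a}e^{ta} - \tfrac{a}{b-a}e^{tb}\right).
\]
A direct computation gives $\varphi(0)=\varphi'(0)=0$, and $\varphi''(s)$ has the form $p(s)(1-p(s))$ for some $p(s)\in(0,1)$, hence $\varphi''(s)\leq 1/4$. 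Taylor's theorem with remainder then yields $\varphi(s) \leq s^2/8 = t^2 I^2/8$, establishing the lemma.

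Combining the lemma with the Chernoff bound gives
\[
\Pr\{S_N \geq \varepsilon N\} \leq \exp\!\left(-t\varepsilon N + \tfrac{N t^2 I^2}{8}\right),
\]
and the right-hand side is minimized at $t^* = 4\varepsilon/I^2$, producing the bound $\exp(-2\varepsilon^2 N/I^2)$. Adding the symmetric lower-tail estimate and applying the union bound completes the proof. The only delicate point in this program is verifying $\varphi''(s) \leq 1/4$ in Hoeffding's lemma; the rest is routine optimization of the exponential moment generating function.
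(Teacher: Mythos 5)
Your proof is correct: this is the classical Hoeffding inequality, which the paper does not prove but simply cites from Hoeffding's original work, and your argument (centering, Chernoff's exponential-moment bound, Hoeffding's lemma via convexity with $\varphi''\leq 1/4$, and optimization at $t^*=4\varepsilon/I^2$) is precisely the standard proof from that reference. All the computations check out, including the exponent $-t^*\varepsilon N + N(t^*)^2I^2/8 = -2\varepsilon^2N/I^2$.
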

 Recall that  string $X$ is partitioned into substrings as $X=S_{1},P_{1},\cdots,S_{k-1},P_{k-1},S_{k},$
where $|S_{i}|=L_{S}$ and $|P_{i}|=L_{P}.$ In our set-up, $L_{S}=\frac{1}{\beta},L_{P}=O(\log \frac{1}{\beta}),$
and $k = n\beta + o(1).$  Let us denote the index of the first bit of $P_{i}$ in $X$
by $\check{p}_{i}$ and the index of the last bit of $P_{i}$ in $X$ by $\hat{p}_{i}$.
Similarly, the first and last indices of $S_{i}$ are denoted by $\check{s}_{i}$
and $\hat{s}_{i}$. Therefore, $X(\check{p}_{i},\hat{p}_{i})=P_{i}$
and $X(\check{s}_{i},\hat{s}_{i})=S_{i}.$

The task of the matching module is to find  ``correct matches'' of $P_i$'s within string $Y$. Next, we formalize the notion of 
correct and incorrect matches for a pivot $P_i$. 
\subsection{Correct and Incorrect Matches}
Consider the substring $D(\check{p}_{i},\hat{p}_{i})$ which is the  part of the deletion pattern $D$ that acts on the pivot $P_i$. We consider the following cases:
\begin{itemize}
\item {\em $D(\check{p}_{i},\hat{p}_{i})$ is the all zeros vector:} There is no deletion within $P_i$. In this case we call
the copy of $P_i$ between indices $f_D(\check{p}_{i})$ and $f_D(\check{p}_{i})$ of $Y$ the {\em correct match} of $P_i$. All other copies
of $P_i$ in $Y$ are considered {\em incorrect matches} of $P_i$.
\item  {\em $D(\check{p}_{i},\hat{p}_{i})$ has one nonzero element:} There is one deletion within $P_i$. In this case, if there is a 
copy of $P_i$ in $Y$ that begins at $f_D(\check{p}_{i})$ or ends at $f_D(\hat{p}_{i})$ then we call it a correct match of $P_i$ and all other copies of 
$P_i$ are called incorrect matches of $P_i$. If there is no such copy of $P_i$ within $Y$, then all copies of $P_i$ within $Y$ are called incorrect matches. 
Notice that in this case there are possibly two correct matches for $P_i$. For instance, let $P_i=000$ and let the immediate undeleted bits before and after 
$P_i$ be zero. Then it is easy to verify that after one deletion within $P_i$, there is a copy of $P_i$ starting at $f_D(\check{p}_{i})$ in $Y$ and 
there is another copy of $P_i$ ending at $f_D(\hat{p}_{i})$ in $Y$.
\item {\em $D(\check{p}_{i},\hat{p}_{i})$ has more than one nonzero element:} There is more than one deletion within $P_i$. In this case all
copies of $P_i$ within $Y$ are considered incorrect matches.
\end{itemize}
\begin{figure}
\centering
\includegraphics[scale=0.08]{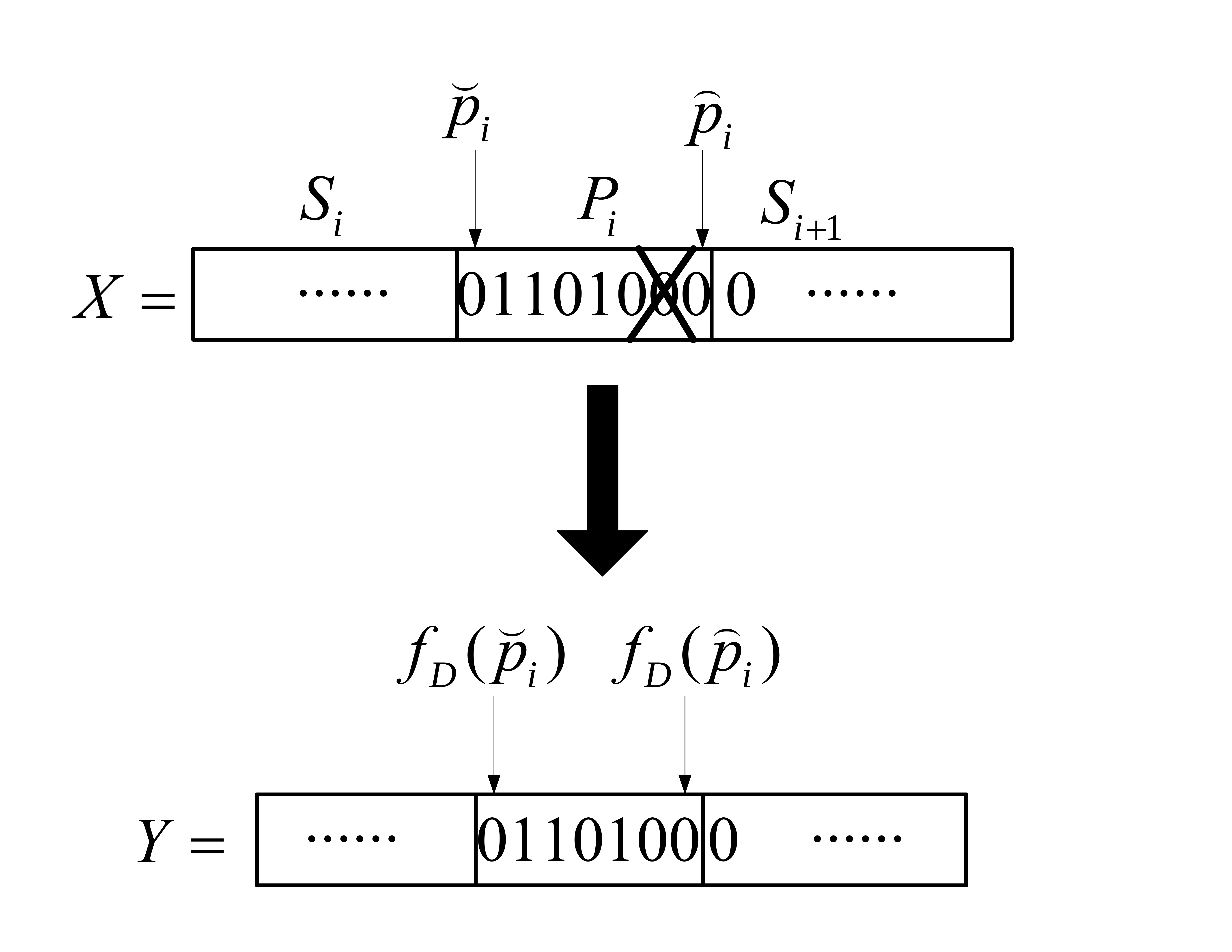}
\caption{\label{one_deletion}Illustration of a correct match of $ P_i$ with one deletion.}
\end{figure}
While the definition of correct and incorrect matches is natural for the case of no deletion within $P_i$, we next explain the reason behind
the definition for the case with deletions within $P_i$. Consider the illustration in Figure \ref{one_deletion} where $P_i=01101000$. Assume the penultimate bit
is deleted from $P_i$. Suppose that the bit right after $P_i$ is $0$. Notice that even with the deleted bit, a copy of $P_i$ appears in $Y$, starting 
at  $f_D(\check{p}_{i})$. This copy of $P_i$ is called a correct match. The reason is that the resulting string $Y$ is the same as in the case where
there is no deletion within $P_i$ and instead the $0$ after $P_i$ is deleted in $X$. In other words, here we can ``move'' the deletion from
$P_i$ to the substring $S_{i+1}$ without changing $Y$. 
 
Although a similar scenario
may happen when there are more than one deletions within $P_i$, i.e., we might be able to move the deleted bits from $P_i$ to the neighboring
segment strings without changing the resulting $Y$, since the probability of these cases is very small (the exact statement will follow), our analysis conservatively counts those matches as incorrect matches.       

Next, we analyze the probability of occurrence of correct matches for $P_i$:
\begin{lm}
\label{no_deletion_lemma}
With probability $1-\beta L_P + o(\beta)$, $P_i$ has no deletions and there is at least one correct match for $P_i$ within $Y$. 
\end{lm}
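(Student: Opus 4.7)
The plan is to reduce the statement to a single Taylor expansion: by the definition of correct match in the ``no deletion in $P_i$'' case, the claim about existence is immediate once we confirm that $P_i$ is untouched by the deletion pattern, so the work is entirely in estimating the probability of that event.

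First, I would invoke the definition of a correct match from the preceding discussion: when $D(\check{p}_i, \hat{p}_i)$ is the all-zeros vector, the substring $Y(f_D(\check p_i), f_D(\hat p_i))$ is, bit for bit, equal to $P_i$, because none of the bits of $P_i$ are removed by the channel and $f_D$ simply shifts their positions in $Y$. Hence on the event ``no deletion in $P_i$,'' a correct match of $P_i$ in $Y$ is guaranteed to exist. Thus the lemma reduces to
\[
\Pr\{D(\check p_i, \hat p_i) = 0^{L_P}\} = 1 - L_P \beta + o(\beta).
\]

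Next, because the channel deletes each bit independently with probability $\beta$, the left-hand side equals $(1-\beta)^{L_P}$. I would expand this exactly via the binomial theorem,
\[
(1-\beta)^{L_P} = 1 - L_P \beta + \sum_{j=2}^{L_P} \binom{L_P}{j}(-\beta)^j,
\]
and bound the tail in absolute value by $\sum_{j \ge 2}(L_P \beta)^j/j! \le (L_P \beta)^2 \, e^{L_P\beta}$. Since $L_P = 11 + 2\log(1/\beta) = O(\log(1/\beta))$, we have $L_P \beta \to 0$ as $\beta \to 0$, so $e^{L_P \beta} = O(1)$ and $(L_P\beta)^2 = O(\beta^2 \log^2(1/\beta)) = o(\beta)$. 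Plugging this in gives $(1-\beta)^{L_P} = 1 - L_P\beta + o(\beta)$, which is exactly the claimed probability.

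There is no serious obstacle here; the only thing to be careful about is making sure the chosen scaling $L_P = O(\log(1/\beta))$ is indeed small enough that the quadratic term of the expansion is absorbed into $o(\beta)$, which it is. The only conceptual point worth emphasizing is the first step: the lemma says ``$P_i$ has no deletions \emph{and} there is at least one correct match,'' but the second conjunct is automatic given the first, by the way correct matches were defined in the deletion-free case.
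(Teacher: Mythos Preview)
Your proof is correct and follows the same approach as the paper: compute $\Pr\{\text{no deletion in }P_i\}=(1-\beta)^{L_P}$ and expand to $1-L_P\beta+o(\beta)$ using $L_P=O(\log\frac{1}{\beta})$. The paper's version is simply terser, omitting the explicit tail bound and the remark that the ``correct match'' conjunct is automatic in the deletion-free case.
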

\begin{proof}
With probability $(1-\beta)^{L_P}$ no bit is deleted from $P_i$. For $L_P=O(\log\frac{1}{\beta})$ we have 
\[(1-\beta)^{L_P} = 1-L_P\beta+o(\beta).\]\end{proof}
\begin{lm}
\label{one_deletion_lemma}
With probability $2\beta+o(\beta)$ there is one deletion within $P_i$ and there is a correct match for  $P_i$ within $Y$. 
\end{lm}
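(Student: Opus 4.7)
The plan is to decompose by the position $j \in \{1, \ldots, L_P\}$ of the single deleted bit inside $P_i$. Let $E_j$ be the event that $D$ flips exactly bit $j$ of $P_i$ and leaves the remaining $L_P - 1$ bits of $P_i$ intact; the $E_j$ are disjoint and by independence of per-bit deletions each satisfies $\Pr(E_j) = \beta(1-\beta)^{L_P-1}$. I would then express the sought probability as $\sum_{j=1}^{L_P}\Pr(E_j \cap M_j)$, where $M_j$ denotes the existence of a correct match for $P_i$ in $Y$ under $E_j$.

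Under $E_j$, by the preceding definition of correct match, $M_j$ is the union of two events: $A_j$, that a copy of $P_i$ starts at $f_D(\check{p}_i)$ in $Y$, and $B_j$, that a copy of $P_i$ ends at $f_D(\hat{p}_i)$. After deleting $P_i(j)$, the $L_P$-bit window of $Y$ starting at $f_D(\check{p}_i)$ reads $P_i(1), \ldots, P_i(j-1), P_i(j+1), \ldots, P_i(L_P), c$, where $c$ is the first undeleted bit of $X$ after position $\hat{p}_i$; term-wise comparison with $P_i$ collapses $A_j$ to the constraint $P_i(j) = P_i(j+1) = \cdots = P_i(L_P) = c$. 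Symmetrically, $B_j$ reduces to $P_i(1) = \cdots = P_i(j) = c'$, where $c'$ is the last undeleted bit of $X$ before $\check{p}_i$.

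Because $X$ is i.i.d.\ Bernoulli$(\tfrac{1}{2})$ and independent of $D$, the bits $c$ and $c'$ are Bernoulli$(\tfrac{1}{2})$ independent of the bits of $P_i$, so $\Pr(A_j \mid E_j) = 2^{-(L_P - j + 1)}$, $\Pr(B_j \mid E_j) = 2^{-j}$, and $\Pr(A_j \cap B_j \mid E_j) = 2^{-(L_P+1)}$ (pinning $L_P + 2$ consecutive bits to a common value). Summing via inclusion-exclusion and evaluating the two geometric sums $\sum_{j=1}^{L_P} 2^{-(L_P-j+1)} = \sum_{j=1}^{L_P} 2^{-j} = 1 - 2^{-L_P}$, the overall probability becomes $\beta(1-\beta)^{L_P-1}\bigl[2(1-2^{-L_P}) - L_P \cdot 2^{-(L_P+1)}\bigr]$. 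For $L_P = 11 + 2\log\tfrac{1}{\beta}$ we have $(1-\beta)^{L_P-1} = 1 - o(1)$ and $L_P\, 2^{-L_P} = o(1)$, so this simplifies to $2\beta + o(\beta)$.

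The principal delicate point is to justify that $c$ and $c'$, defined via a random, $D$-dependent index, are Bernoulli$(\tfrac{1}{2})$ variables independent of the bits of $P_i$. I would establish this by conditioning on $D$: the index of the first undeleted bit after $\hat{p}_i$ is measurable with respect to $D$, which is independent of $X$, and any single entry of the i.i.d.\ Bernoulli$(\tfrac{1}{2})$ sequence $X$ outside $\{\check{p}_i, \ldots, \hat{p}_i\}$ is independent of $P_i$. A secondary nuisance is the (probability at most $\beta^{n-\hat{p}_i}$) event that no undeleted bit exists after $P_i$, which is exponentially small and absorbed into the $o(\beta)$ error term.
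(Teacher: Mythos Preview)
Your proposal is correct and follows essentially the same approach as the paper: decompose by the position of the single deleted bit, characterize the two correct-match events as run conditions extending to the nearest undeleted neighbor, compute their probabilities and intersection via inclusion--exclusion, and sum the resulting geometric series. Your treatment is in fact slightly more careful than the paper's in explicitly justifying why the boundary bits $c$ and $c'$ (indexed through the random pattern $D$) remain Bernoulli$(\tfrac12)$ and independent of $P_i$, and in handling the edge case where no undeleted bit exists after $P_i$.
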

\begin{proof}
Fix $h$ as the place of the deleted bit out of $L_P$ bits of $P_i$. Suppose $P_i(h)=b \in \{0,1\}$. It is simple to observe that there is a copy of $P_i$ starting at $f_D(\check{p}_{i})$ in $Y$
if and only if $P_i(h,L_P)=b,b,\cdots,b$  and furthermore, the first undeleted bit 
after $P_i$ in $X$ is also $b$. In other words, the $h$th bit of $P_i$ should belong to the final ``run'' of zeros or ones of $P_i$ and the first undeleted bit after $P_i$ should also be of the same value. 
With probability $\beta(1-\beta)^{L_P-1}$, exactly the $h$th bit of $P_i$ is deleted and with probability $2^{-(L_P-h+1)}$ the bits after $h$th bit in $P_i$ and the first undeleted bit after $P_i$ have the same value     
as the $h$th bit of $P_i$. The overall probability of this case is $\beta(1-\beta)^{L_P-1}2^{-(L_P-h+1)}$. Similarly, there is a copy of $P_i$ finishing at $f_D(\hat{p}_{i})$ in $Y$ if and only if all bits before the $h$th bit in 
$P_i$ and the first undeleted bit before $P_i$ are equal to the $h$th bit of $P_i$. This case happens with probability $\beta(1-\beta)^{L_P-1}2^{-h}$. The intersection of the two events happens when $P_i$ is all-zeros or all-ones string and the immediate undeleted bits before and after $P_i$ have the same value as the bits in $P_i$. This case happens with probability $\beta(1-\beta)^{L_P-1}2^{-(L_P+1)}$. By using the inclusion-exclusion principle and by varying $h$ from $1$ to $L_P$, we find the total probability of having one deletion within $P_i$, and as a result at least one correct match for $P_i$ to be:
\begin{align*}
\beta(1-\beta)^{L_{P}-1}\sum_{h=1}^{L_{P}}\left(2^{-(L_{P}-h+1)}+2^{-h}-2^{-(L_{P}+1)}\right) & =\\
\beta(1-\beta)^{L_{P}-1}(2-2^{1-L_{P}}-L_{P}2^{-(L_{P}+1)}) & =\\
2\beta+o(\beta),\end{align*} 
where in the last step we assumed $L_P=O(\log\frac{1}{\beta})$.
\end{proof}
\begin{lm}
\label{multiple_deletions_lemma}
With probability $o(\beta)$, $P_i$ has more than one deletion. 
\end{lm}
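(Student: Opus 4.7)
The plan is to bound the probability directly by a union bound over pairs of deleted positions. Each of the $L_P$ bits of $P_i$ is deleted independently with probability $\beta$, so the probability that any specific pair of positions is jointly deleted is exactly $\beta^2$. Summing over the $\binom{L_P}{2}$ pairs gives
\begin{equation*}
\Pr\{\text{more than one deletion in } P_i\} \;\leq\; \binom{L_P}{2}\beta^2 \;\leq\; \frac{L_P^2}{2}\beta^2.
\end{equation*}

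Next I would substitute $L_P = O(\log\frac{1}{\beta})$, which yields a bound of order $\beta^2\log^2\frac{1}{\beta}$. Since $\beta\log^2\frac{1}{\beta} \to 0$ as $\beta \to 0$, this quantity is $o(\beta)$, as required.

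Alternatively one can derive the same conclusion by Taylor-expanding $(1-\beta)^{L_P}$ and $L_P\beta(1-\beta)^{L_P-1}$, and verifying that the residual $1 - (1-\beta)^{L_P} - L_P\beta(1-\beta)^{L_P-1}$ is $o(\beta)$ in the regime $L_P = O(\log\frac{1}{\beta})$; this simply reproduces the union-bound estimate up to constants. There is no genuine obstacle here — the only thing to check is that the chosen pivot length $L_P = 11 + 2\log\frac{1}{\beta}$ from Theorem~\ref{matching_theorem} is small enough that $L_P^2\beta \to 0$, which is immediate. The lemma therefore reduces to a short two-line estimate and its main purpose is to justify treating the ``more than one deletion'' contribution as a negligible correction in Lemmas~\ref{no_deletion_lemma} and~\ref{one_deletion_lemma}.
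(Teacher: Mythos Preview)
Your proposal is correct. The paper's own argument is precisely the alternative you sketch: it writes the probability of more than one deletion as $1-(1-\beta)^{L_P}-L_P\beta(1-\beta)^{L_P-1}$ and observes this is $o(\beta)$ when $L_P=O(\log\frac{1}{\beta})$; your union-bound-over-pairs gives the same conclusion with an explicit $\tfrac{L_P^2}{2}\beta^2$ upper bound, which is arguably even more direct.
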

\begin{proof}
Since the probability of no deletion within $P_i$ is $(1-\beta)^{L_P}$ and the probability of one deletion within $P_i$ is $L_P\beta(1-\beta)^{L_P-1}$,
then the probability of more than one deletion within $P_i$ is 
\[1-(1-\beta)^{L_P}-L_P\beta(1-\beta)^{L_P-1}=o(\beta),\]
where we assumed $L_P=O(\log\frac{1}{\beta})$ in the final estimate.
\end{proof}
Let us define 
\begin{equation}
\label{R}
R:=1-L_P\beta+2\beta.
\end{equation}
 From the preceding lemmas we conclude that: 
\begin{lm}
For a random string $X$ and a random  deletion pattern $D$, on average, the number of pivots with at least one correct match in $Y$ is $(R+o(\beta))k$.  
\end{lm}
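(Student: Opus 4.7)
The plan is to reduce the lemma to a direct application of linearity of expectation, together with the three preceding per-pivot lemmas. For each $1 \leq i \leq k-1$, I would introduce the indicator random variable $Z_i$ that equals $1$ if $P_i$ has at least one correct match in $Y$, and $0$ otherwise. Then the quantity to bound is $\mathbb{E}\bigl[\sum_{i=1}^{k-1} Z_i\bigr] = \sum_{i=1}^{k-1} \Pr\{Z_i = 1\}$, so it suffices to compute $\Pr\{Z_i=1\}$ for a single $i$.

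Next I would decompose the event $\{Z_i = 1\}$ according to the Hamming weight of the deletion pattern restricted to $P_i$, namely $D(\check{p}_i,\hat{p}_i)$. These three cases (weight $0$, weight $1$, weight $\geq 2$) are pairwise disjoint and exhaust the sample space. By the definition of correct/incorrect matches given earlier, a correct match can only arise in the first two cases. Lemma \ref{no_deletion_lemma} handles the zero-deletion case and contributes $1 - L_P\beta + o(\beta)$, Lemma \ref{one_deletion_lemma} handles the one-deletion case and contributes $2\beta + o(\beta)$, and Lemma \ref{multiple_deletions_lemma} shows the remaining case contributes $o(\beta)$ (and gives $0$ correct matches by definition). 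Adding the first two yields $\Pr\{Z_i=1\} = 1 - L_P\beta + 2\beta + o(\beta) = R + o(\beta)$.

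Finally, since this probability is identical for every $i$ (each pivot is an independently random $L_P$-bit block subjected to an i.i.d.\ deletion pattern, with the surrounding bit being i.i.d.\ Bernoulli$(1/2)$ as well), summing over $i=1,\ldots,k-1$ gives $(k-1)(R + o(\beta)) = (R+o(\beta))k$, using $k = n\beta + o(1)$ so that the $-1$ is absorbed into the $o(\beta)k$ term.

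The only subtle point, and the one I would be careful to mention explicitly, is that the three lemmas really do cover disjoint events on the deletion pattern within $P_i$, so the probabilities can simply be added without any inclusion--exclusion correction; the internal inclusion--exclusion (between matches anchored at $f_D(\check{p}_i)$ and at $f_D(\hat{p}_i)$) has already been handled inside Lemma \ref{one_deletion_lemma}. Beyond this bookkeeping, there is no real obstacle — the lemma is essentially a corollary of the three that precede it.
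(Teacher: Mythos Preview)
Your proposal is correct and matches the paper's approach exactly: the paper states this lemma immediately after the three per-pivot lemmas with the single line ``From the preceding lemmas we conclude that,'' offering no further proof. Your write-up simply makes explicit the linearity-of-expectation and disjoint-case bookkeeping that the paper leaves implicit.
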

By applying Theorem \ref{Hoeffding} we conclude that:
\begin{lm}
\label{good_vertices}
For a random string $X$ and a random deletion pattern $D$, with probability $1-2^{-\Omega(n)}$, there are $(R+o(\beta))k$ pivots with at least one correct match in $Y$.  
\end{lm}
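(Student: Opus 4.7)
The plan is to write the quantity of interest as a sum $N := \sum_{i=1}^{k-1} Z_i$, where $Z_i$ is the indicator that pivot $P_i$ has at least one correct match in $Y$. Combining Lemmas \ref{no_deletion_lemma}, \ref{one_deletion_lemma}, and \ref{multiple_deletions_lemma}, the probability that $P_i$ admits a correct match is $(1-L_P\beta+o(\beta))+(2\beta+o(\beta)) = R+o(\beta)$; linearity of expectation then gives $\mathbb{E}[N]=(R+o(\beta))k$, reproducing the mean supplied by the immediately preceding lemma. What remains is to establish a concentration bound around $\mathbb{E}[N]$ with failure probability $2^{-\Omega(n)}$.

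The main difficulty is that the $Z_i$ are not jointly independent. The one-deletion branch of Lemma \ref{one_deletion_lemma} causes $Z_i$ to inspect the first undeleted bit immediately after $P_i$ (near the start of $S_{i+1}$), while $Z_{i+1}$ inspects the first undeleted bit immediately before $P_{i+1}$ (near the end of $S_{i+1}$); although typically these involve disjoint positions in $S_{i+1}$, they share the deletion pattern acting on $S_{i+1}$, and in the (exponentially unlikely) event that $S_{i+1}$ has a run of consecutive deletions spanning most of its length the two events can even depend on the same bit of $X$.

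To handle this, I would apply the Azuma--Hoeffding inequality to the Doob martingale obtained by revealing the pairs $(X(j),D(j))$ one index at a time. The key step is verifying bounded differences: first condition on the high-probability event $\mathcal{E}$ that no segment contains more than $O(\log\tfrac{1}{\beta})$ consecutive deletions, which holds with probability $1-2^{-\Omega(n)}$ by Theorem \ref{Hoeffding_2} (or a direct geometric-tail computation). On $\mathcal{E}$, the value of each $Z_i$ is determined by the bits and deletion indicators inside $P_i$ together with a short prefix of $S_{i+1}$ and suffix of $S_i$, so altering a single $(X(j),D(j))$ can change at most a constant number of the $Z_i$ (in fact at most three, corresponding to the pivot containing index $j$ and its two neighbors). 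Azuma--Hoeffding then yields \[\Pr\bigl\{|N-\mathbb{E}[N]|>\varepsilon k\bigr\}\leq 2\exp\bigl(-\Omega(\varepsilon^{2} n)\bigr),\] and combining this with $\Pr[\mathcal{E}^c]\leq 2^{-\Omega(n)}$ produces the stated concentration. The principal obstacle throughout is the bounded-differences verification: without the event $\mathcal{E}$, a long deletion run could propagate the influence of a single bit arbitrarily far across adjacent pivots, which is why isolating $\mathcal{E}^c$ first is essential.
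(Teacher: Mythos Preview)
The paper's own proof is far shorter: it asserts that the events $\{P_i\text{ has a correct match}\}$ are mutually independent and invokes Theorem~\ref{Hoeffding} directly. You are right to be wary of that independence claim---the one-deletion clause of Lemma~\ref{one_deletion_lemma} makes $Z_i$ depend on the first undeleted bit of $S_{i+1}$ while $Z_{i+1}$ depends on the last, so adjacent indicators are coupled through the shared segment---and your Azuma--Hoeffding route is in that sense more careful than the paper's.

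However, your conditioning step does not work as stated. For any constant $c$, the probability that a fixed block of $c\log\frac{1}{\beta}$ consecutive positions is entirely deleted equals $\beta^{c\log(1/\beta)}$, which is a positive constant once $\beta$ is fixed; over the $\Theta(n)$ possible starting positions the expected number of such runs is therefore $\Theta(n)$, so in fact $\Pr[\mathcal{E}^c]\to 1$ as $n\to\infty$, not $2^{-\Omega(n)}$. Neither Theorem~\ref{Hoeffding_2} nor a geometric-tail computation can produce exponential-in-$n$ decay here, because $\mathcal{E}^c$ is not a large-deviation event for a sum but merely the appearance of a short pattern, and short patterns recur $\Theta(n)$ times in a length-$n$ i.i.d.\ string. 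Without $\Pr[\mathcal{E}^c]\le 2^{-\Omega(n)}$ your final decomposition cannot close. A cleaner repair is to compare each $Z_i$ with a surrogate $\tilde Z_i$ that, in the one-deletion clause, inspects the immediately adjacent bits $X(\hat p_i{+}1)$ and $X(\check p_i{-}1)$ rather than the first undeleted neighbors: since $L_S\ge 2$ the $\tilde Z_i$ depend on disjoint coordinate blocks and are genuinely independent with mean $R+o(\beta)$, while $\{\tilde Z_i\neq Z_i\}$ is contained in the event that $P_i$ suffers a deletion and additionally $D(\hat p_i{+}1)=1$ or $D(\check p_i{-}1)=1$; these containing events depend on disjoint coordinates and have probability $O(L_P\beta^2)=o(\beta)$, so two straightforward applications of Theorem~\ref{Hoeffding} give the claim.
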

\begin{proof}
The probability that a pivot has a correct match in $Y$ is $R+o(\beta)$ and it is independent of other pivots. Therefore, if in  Theorem \ref{Hoeffding} we set $p_0$ to $R+o(\beta)$,  $N$ to $k$, and $\varepsilon$ to $o(\beta)$, we conclude that the probability that for  a random string $X$ and a random string $D$ there are between $(p_0 - o(\beta))k$ and $(p+ o(\beta))k$   pivots with correct matches in $Y$, is at least $1-2e^{-2o(\beta)^2k} = 1-2^{-o(\beta) n} = 1-2^{-\Omega(n)}$. The fact that the set of  integers between $(p_0 - o(\beta))k$ and $(p+ o(\beta))k$ can be represented by the set of integers of the form $(R+o(\beta))k$, yields the result.  
\end{proof}
\begin{lm}
\label{two_matches}
For a random string $X$ and a random deletion pattern $D$, with probability $1-2^{-\Omega(n)}$, there are $o(\beta)k$ pivots with two correct matches in $Y$.  
\end{lm}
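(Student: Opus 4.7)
My plan is to first identify the precise circumstances under which a pivot $P_i$ can have two correct matches in $Y$, then compute the probability of this event for a single pivot, and finally apply a concentration inequality to pass from expectation to a high-probability statement.

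By the definition of correct matches in Section III.A, the only scenario in which $P_i$ can have \emph{two} correct matches is when $D(\check{p}_i,\hat{p}_i)$ has exactly one nonzero element: in the zero-deletion case the unique correct match is the copy starting at $f_D(\check{p}_i)$, and in the multiple-deletion case every match is declared incorrect. Thus the event ``$P_i$ has two correct matches'' is the intersection of the two events analyzed in Lemma \ref{one_deletion_lemma}: a copy of $P_i$ that \emph{starts} at $f_D(\check{p}_i)$ \emph{and} a copy of $P_i$ that \emph{ends} at $f_D(\hat{p}_i)$. As noted in that proof, this intersection forces $P_i$ to be all-zeros or all-ones (so the deleted-bit position $h$ is irrelevant) and additionally forces the nearest undeleted bits on both sides of $P_i$ to equal this same constant value.

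Using this, I would estimate the per-pivot probability as follows: summing the inclusion-exclusion term $\beta(1-\beta)^{L_P-1}2^{-(L_P+1)}$ over the $L_P$ possible deletion positions gives a probability of at most $L_P\,\beta\,2^{-(L_P+1)}$. Plugging in $L_P = 11 + 2\log\tfrac{1}{\beta}$ yields $O(\beta^3 \log\tfrac{1}{\beta}) = o(\beta)$. Hence the expected number of pivots with two correct matches is $o(\beta)k$.

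To finish I would apply Theorem \ref{Hoeffding} to the indicator variables $Z_i$ of the event ``$P_i$ has two correct matches''. The $Z_i$'s depend only on $P_i$, $D(\check{p}_i,\hat{p}_i)$, and one bit on either side of $P_i$ (the first undeleted neighbor), so the dependence structure is very local, and one can group pivots into a constant number of classes whose indicators are independent (or alternatively invoke the bounded-differences version, Theorem \ref{Hoeffding_2}, directly on the sum). Setting $p_0 = o(\beta)$, $N=k$ and $\varepsilon$ of the same order $o(\beta)$, the tail probability is at most $2e^{-2\varepsilon^2 k} = 2^{-\Omega(n)}$ since $k = n\beta + o(1)$ and $\varepsilon^2 k = \Omega(n\beta^3)$ is polynomially large in $n$. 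The main obstacle is the bookkeeping for the ``two correct matches'' event (making sure no other exotic configuration has been missed) and cleanly handling the mild dependence between adjacent pivots through the shared neighboring segments; both are essentially routine once the one-deletion case is isolated.
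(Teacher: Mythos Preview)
Your proposal is correct and follows essentially the same route as the paper: identify (via the analysis in Lemma~\ref{one_deletion_lemma}) that the per-pivot probability of two correct matches is $L_P\beta(1-\beta)^{L_P-1}2^{-(L_P+1)}=o(\beta)$, deduce that the mean count is $o(\beta)k$, and then apply Theorem~\ref{Hoeffding} with $p_0=o(\beta)$, $N=k$, $\varepsilon=o(\beta)$ to get the $1-2^{-\Omega(n)}$ bound. The paper simply invokes Hoeffding without commenting on independence, whereas you explicitly flag the mild dependence through the neighboring undeleted bits and propose a grouping or bounded-differences workaround; that extra care is sound but not strictly needed here since the relevant neighboring bits for distinct pivots lie in the long segments $S_i$ and are (with overwhelming probability) disjoint.
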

\begin{proof}
As we showed in the proof of Lemma \ref{one_deletion_lemma}, the probability that a pivot has a deletion and two correct matches in $Y$ is given by the following expression
\[L_P\beta (1-\beta)^{L_P-1}2^{-(L_P+1)} = o(\beta), \]
where we assumed $L_P = O(\log\frac{1}{\beta}).$  Therefore, the average number of pivots with two correct matches in $Y$ is $o(\beta)k.$ Now, if in  Theorem \ref{Hoeffding} we set $p_0$ to $o(\beta)$,  $N$ to $k$, and $\varepsilon$ to $o(\beta)$, we conclude that the probability that for  a random string $X$ and a random string $D$ there are between $(p_0 - o(\beta))k$ and $(p+ o(\beta))k$   pivots with two correct matches in $Y$, is at least $1-2e^{-2o(\beta)^2k} = 1-2^{-o(\beta) n} = 1-2^{-\Omega(n)}$. Since the set of  integers between $(p_0 - o(\beta))k$ and $(p+ o(\beta))k$ can be represented by the set of integers of the form $o(\beta)k$, the result follows.  
\end{proof}
\subsection{ The Matching Graph}
The task of the matching module is to detect correct matches of $P_i$'s within $Y$. For this purpose we use a graph theoretic method. We define a graph
$G(V,E)$ with the vertex set as follows.
Graph $G$ has $k+1$ layers of vertices which are denoted
by $\Lambda_{0},\Lambda_{1},\cdots,\Lambda_{k}.$ {Each vertex in layer $\Lambda_i$, $1\leq i\leq k-1,$ represents a match of pivot $P_i$ in string $Y$.}
We refer to the vertices of $\Lambda_{i}$ and matches of $P_{i}$ in $Y$
interchangeably. For vertex $v\in \Lambda_i$, let $\check{v}$ and $\hat{v}$ denote, respectively, the first and the last indices of the match of $P_i$ corresponding to $v$ in $Y.$ We introduce two auxiliary vertices
$s$ and $t$ where $\Lambda_{0}=\left\{ s\right\} $ with $\hat{s}=0$
and $\Lambda_{k}=\left\{ t\right\} $ with $\check{t}=|Y|+1.$ Vertices
$s$ and $t$ represent the beginning and the ending of string $Y$ respectively.

We say a vertex in $\Lambda_i$ is a {\em good} vertex if it corresponds to a correct match of $P_i$ within $Y$. We call a vertex in $\Lambda_i$ a {\em bad}
vertex if it corresponds to an incorrect match of $P_i$. By definition of correct and incorrect matches, in each layer of graph $G$, there are possibly zero, one, or two good vertices. 
In order to detect the correct matches of $P_i$'s within $Y$, we need to find good vertices in graph $G$. For that, we  define the edge set of $G$ 
such that the good vertices are distinguished by their connectivity in the graph.

Let us define the distance between two vertices $u$ and $v$ in $G$ as follows:
\[\mbox{Dis}(u,v):=\check{v}-\hat{u}-1.\]
Notice that $\mbox{Dis}(u,v)$ is nonnegative only when the first bit of $v$ appears after the last bit of $u$. In that case, $\mbox{Dis}(u,v)$ is the number of bits between
$u$ and $v$ in $Y$.

For two pivots $P_i$ and $P_j$ with $i<j$ in $X$, the number of bits between them in $X$ is given by \[(j-i-1)L_P+(j-i)L_S.\]
If both $P_i$ and $P_j$ have correct matches in $Y$,  the number of bits  between the correct match for $P_i$ and the correct match for $P_j$ is at most $(j-i-1)L_P+(j-i)L_S$. 

Furthermore, in most cases, for $i<j$, the first bit of the correct match for $P_j$ appears after the last bit of the correct match for $P_i$. To see this, first notice that,
since the first bit of $P_j$ appears after the last bit of $P_i$ in $X$, if there are no deletions within $P_i$ and $P_j$, their order is preserved in $Y$. 

Now consider the following example: 
let $P_i=0000$ and $P_j=0000$. Also assume that all bits between $P_i$ and $P_j$ are deleted except for a single $0$ bit, and assume that exactly one bit is deleted from $P_i$ and exactly one bit is deleted from $P_j$.
In this case, the compound substring of $Y$ corresponding to $P_i$ and $P_j$ and the bits in between them in $X$ is $0000000$, where the first four bits constitute the correct match for $P_i$ and the last four bits constitute the correct match for $P_j$. As we can observe, the first
bit for the correct match of $P_j$ is the last bit for the correct match of $P_i$. The distance between the correct match for $P_i$ and the correct match for $P_j$  is $-1$. It is easy to verify that in general 
for $j>i$,  the least value of the distance between the correct match of $P_i$ and the correct match of $P_j$ is $-1$.    
 
Based on the two preceding observations, we connect a vertex $u\in \Lambda_i$ to a vertex $v\in \Lambda_j$ if and only if     
\begin{equation} \label{edge_condition}
-1 \leq \mbox{Dis}(u,v)\leq(j-i-1)L_P+(j-i)L_S.\end{equation}
Therefore, all pairs of good vertices from different layers are
connected together. By definition, $s$ and $t$, which indicate the beginning and the  ending of string $X$, respectively, are treated as ``auxiliary'' good vertices. Therefore, good vertices across different layers form an $s-t$ path in graph $G$. 
However, there are potentially many other pairs of vertices that satisfy the condition of \eqref{edge_condition} and are connected together. Figure \ref{graph} illustrates an instance of graph $G$ with $8$ layers and the connections between vertices. 
\begin{figure}
\centering
\includegraphics[scale=0.35]{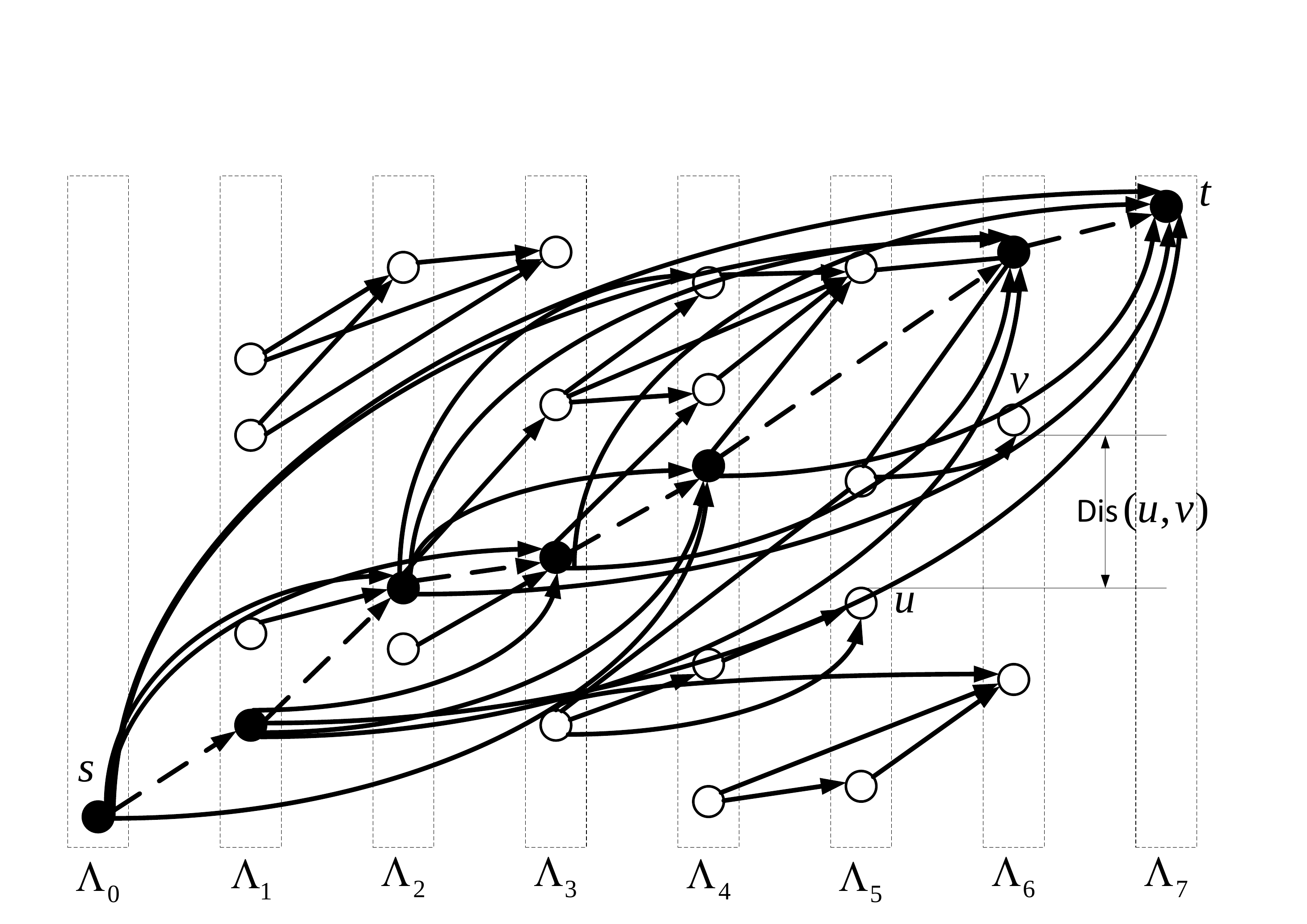}
\caption{\label{graph}Figure illustrates a graph $G$ with $8$ layers of vertices. The horizontal axis indicates different layers and the vertical axis indicates the position of each vertex in string $Y$ that can take values from $1$ to $|Y|$. The good and bad vertices are distinguished by black and white colors, respectively. The first layer has only one vertex $s$ and the last layer has only one vertex $t$. As it is seen, all good vertices in the graph are connected together and they form an $s-t$ path which is represented by the dashed edges in the graph.}
\end{figure}

The following 
theorem shows that, with very high probability,  bad vertices do not contribute to an $s-t$ path. That is, any $s-t$ path of the appropriate length in graph $G$ is formed mostly of good vertices. Recall the definition of $R$ from \eqref{R}. We then have the following result.  
\begin{thm}
\label{thm:For-a-random}Let  $X$ be a random input string to a deletion channel and $D$ be a random deletion pattern. Let $Y$ be the string obtained from $X$ and $D$. Let $G$ denote the matching graph corresponding to $Y$.
 Then, for $L_P\geq 11+2\log\frac{1}{\beta}$,
{with probability {at least} $1-2^{-\Omega(n)}$, all paths from $s$ to $t$ with $Rk+o(\beta)k$ vertices, have at least $Rk-\beta k+o(\beta)k$
good vertices.} 
\end{thm}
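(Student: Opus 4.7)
The plan is to union-bound the expected number of $s$--$t$ paths of length $K := Rk + o(\beta)k$ carrying more than $b := \beta k + o(\beta)k$ bad vertices, and then conclude via Markov's inequality. First I would secure, with probability $1-2^{-\Omega(n)}$, the \emph{regularity} event consisting of: $|Y| = (1-\beta)n(1+o(1))$, exactly $(R+o(\beta))k$ layers contain a good vertex, at most $o(\beta)k$ carry two good matches, and the gaps between consecutive good layers have aggregate second moment $O(k)$. The first three pieces are Theorem \ref{Hoeffding}, Lemma \ref{good_vertices}, and Lemma \ref{two_matches}; the second-moment piece follows from one more application of Hoeffding to the i.i.d.\ indicators ``layer $i$ is good''. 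The rest of the argument runs conditional on this event.

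The key probabilistic input is: for any fixed pair $(i,p)$ with $p$ away from the canonical match location of $P_i$, the two length-$L_P$ windows $Y(p,p+L_P-1)$ and $P_i$ read from disjoint bits of $X$, so under $X \sim \mathrm{Ber}(1/2)^{\otimes n}$ we have $\Pr\{Y(p,p+L_P-1) = P_i\} = 2^{-L_P}$. Moreover, the lower half of \eqref{edge_condition} forces any two vertices on one path to be at distance $\geq L_P - 1$ in $Y$, so any $b$ bad vertices on a single path consume pairwise disjoint bits of $X$, and the joint probability that $b$ specified bad coincidences simultaneously hold is exactly $2^{-bL_P}$. The upper half of \eqref{edge_condition} simultaneously confines each bad vertex on the path to a window of size at most $2(i^+ - i^-)L_S$ around the canonical position of its pivot, where $i^-, i^+$ denote the nearest good layers bracketing the bad vertex along the path.

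The union bound now splits into (a) $\binom{K}{b}$ choices of which $b$ of the $K$ path positions are marked bad, (b) at most $(C L_S)^b$ choices of position within the per-vertex windows (the constant $C$ absorbs the cumulative window inflation $\sum_j (i_j^+ - i_j^-)^2 = O(k)$ furnished by the regularity event), and (c) the probability factor $2^{-b L_P}$. Substituting $L_S = 1/\beta$, $b = \beta k$, $L_P \geq 11 + 2\log(1/\beta)$, and $\binom{K}{b} \leq (e/\beta)^{\beta k}$ collapses the product to $(C'/2^{11})^{\beta k}$ for some absolute $C' < 2^{11}$, which is $2^{-\Omega(\beta k)} = 2^{-\Omega(n)}$ since $\beta$ is a fixed positive constant. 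The main obstacle is the gap-by-gap bookkeeping: each bad vertex's allowable window swells by the local gap between good layers, so the second-moment control of gap sizes given by the regularity event is what keeps the per-vertex positional cost at a universal constant rather than, say, a factor of $\log n$. The ``$2\log(1/\beta)$'' summand inside $L_P$ is then precisely what absorbs the $L_S = 1/\beta$ positional freedom per bad vertex, and the additive ``$11$'' is what ultimately dominates the universal constant $C'$.
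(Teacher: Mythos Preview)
Your overall architecture---union bound over $s$--$t$ paths, with each prescribed bad match costing a factor $2^{-L_P}$---is exactly the paper's. The gap is in how you enumerate the paths and count bad-vertex positions.

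\textbf{Enumeration.} Your factor $\binom{K}{b}$ accounts only for which of the path's $K$ slots carry bad vertices; it does not account for \emph{which layers} those bad vertices sit in. A path of length $K$ with $b$ bad vertices is specified by (i) a choice of $K-b$ good layers out of the $(R+o(\beta))k$ available, (ii) a choice of $b$ layers for the bad vertices out of the remaining $(1-R+\alpha+o(\beta))k$ layers, and (iii) a position in $Y$ for each bad vertex. The paper carries both binomials, contributing roughly $-2\alpha\log\alpha$ to the exponent; your single binomial gives only $-\alpha\log\alpha$. The missing factor is of order $2^{\alpha k\log(1/\alpha)}$, which at $\alpha=\beta$ is $2^{\beta k\log(1/\beta)}$---not something the constant $11$ can absorb.

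\textbf{Positional counting.} Your claim of $(CL_S)^b$ positions per bad-vertex configuration is where the argument would break once the missing binomial is restored. A bad vertex bracketed by good path-vertices at layers $i^-,i^+$ has a window of size $\approx (i^+-i^-)L_S$, and $(i^+-i^-)$ is governed by the path's own good/bad pattern, not by the graph's good-layer gaps; your ``second-moment of gaps'' regularity controls the wrong quantity. More to the point, even granting a per-vertex $O(L_S)$ window, $(CL_S)^b = 2^{\alpha k\log(1/\beta)+O(\alpha k)}$ is too loose: combined with the full enumeration cost $-2\alpha\log\alpha$ and the saving $-\alpha L_P = -\alpha(11+2\log(1/\beta))$, the exponent at $\alpha=\beta$ becomes $\beta(\log(1/\beta)-11+O(1))$, which is \emph{positive} for small $\beta$.

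The paper sidesteps both problems with a single device: it parametrizes bad-vertex positions by the inter-vertex distances $\delta_j$, consolidates all the edge constraints \eqref{edge_condition} along the path into one equality $\sum_{j\in\mathcal H}\delta_j = 2\alpha k(1+o(1))$ (via a concentration lemma, since this sum equals the number of deletions landing in the ``variable'' stretches), and then counts nonnegative integer solutions by stars-and-bars to get $\le 2^{5\alpha k}$. That $\beta$-free bound is what lets the $2\log(1/\beta)$ in $L_P$ exactly cancel the $-2\alpha\log\alpha$ from the two binomials, leaving the additive $11$ to beat the residual $10$. Your per-vertex-window accounting does not reach this; the $\delta_j$ consolidation is the missing idea.

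Finally, you compute only at the boundary $b=\beta k$; the paper integrates over $\alpha\in[\beta,1]$ (splitting at $\alpha=1/2$ with a cruder bound for the upper range), which you would also need.
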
  
Theorem \ref{thm:For-a-random} is not only an existence statement,
but also has an algorithmic implication. The implication is that
if we pick any path from $s$ to $t$ {with} $Rk+o(\beta)k$ vertices, the
path has many good vertices. Since finding an $s-t$ path of an appropriate length in $G$
is a computationally  tractable task (we will discuss the computational complexity in the next section), finding a large fraction of good
vertices is also a tractable task. 

{ \emph{Overview:}  Before presenting the detailed proof of the theorem we first sketch the overall idea of the proof. To prove the theorem, we show that for a random string $X$ and a random deletion pattern $D$, the probability of the existence of an $s-t$ path $Q$ in $G$ with $Rk+o(\beta)k$ vertices, such that the number of good vertices on $Q$ is less than $Rk-\beta k+o(\beta)k$  is upper bounded by $2^{-\Omega(n)}$. Equivalently, we  show that the probability of the existence of an $s-t$ path $Q$ with $Rk+o(\beta)k$ vertices such that number of bad vertices on $Q$ is more than $\beta k+o(\beta)k$  is upper bounded by $2^{-\Omega(n)}$.
To find an upper bound on the latter probability we use the union bound: for every $\alpha$ with $\beta \leq \alpha \leq R+o(\beta)$, we find an upper bound on the probability that  there exists an $s-t$ path $Q$ with $Rk+o(\beta)k$ vertices such that the number of bad vertices on $Q$ is  $\alpha k$. Then, by integrating the upper bound over all values of  $\alpha$ and showing that it is less than $2^{-\Omega(n)}$, we conclude the result. 

For a fixed value of $\alpha$ we evaluate an upper bound on the probability of the existence of an $s-t$ path $Q$ with $Rk+o(\beta)k$ vertices and $\alpha k$ bad vertices in the following way.
Let us denote all good vertices of $G$ by $U$, where $|U| = Rk+o(\beta)k$ with probability at least $1-2^{-\Omega(n)}$. We fix the realizations of all $\check{u}$ for which  $u\in U$. In other words, we fix the positions of good vertices of graph $G$. For $Q$ to have exactly $\alpha k$ bad vertices and $Rk-\alpha k + o(\beta)k$ good vertices, we first choose $Rk-\alpha k + o(\beta)k$ good vertices of $Q$ from the set $U$ (and account for the cases with possibly two correct matches). Graph $G$ has $k+1$ layers and  $Rk-\alpha k + o(\beta)k$ have been chosen to include the good vertices of $Q$. The remaining $\alpha k $ vertices of $Q$ are chosen from the remaining $k+1 - (Rk-\alpha k + o(\beta)k)$ layers. Since the vertices in set $U$ have fixed positions in $Y$, all good vertices of $Q$ have fixed positions in $Y$. However, we have only fixed the layers which include the bad vertices of $Q$ and not the positions of bad vertices in $Y$. Next, to find an upper bound on the number of possible  positions of the bad vertices of $Q$, we use a combinatorial argument based on the constraints imposed by the connectivity of consecutive vertices on $Q$ via the edges of graph $G$. We notice that  the positions  of all vertices on $Q$ are uniquely determined based on the distances between consecutive vertices on $Q$. Since the good vertices on $Q$ have fixed positions, the distance between two consecutive vertices on $Q$, where both of them are good vertices, is fixed. However, the distance of two consecutive vertices on $Q$, where one of the vertices is a bad vertex, is a variable. We need to find all solutions to these variables such that constraints defined by \eqref{edge_condition} are satisfied. We show how to consolidate all resultant edge constraints over all edges of $Q$ into a  single linear constraint, and then by counting the number of solutions to that constraint we find an upper bound on the number of possible positions of the bad vertices of $Q$. 

Finally, we notice that the probability that a substring $P_i$ has an incorrect match in $Y$ at some specific position is $2^{-L_P}$. Therefore, if we are given the positions of all bad vertices on $Q$, the probability that there are incorrect matches of the corresponding pivots at those positions is  $2^{-\alpha k L_P}$ each. By multiplying  $2^{-\alpha k L_P}$ by the upper bound on the possible number of positions for bad vertices of $Q$, we find an upper bound on the probability of the existence of $Q$ with $Rk+o(\beta) k$ vertices and $\alpha k$ bad vertices. Next we present the details of our argument.
}

\begin{proof}
We begin by finding an upper bound on the probability of the existence
of a path $Q$ from $s$ to $t$ with $Rk+o(\beta)k$ vertices out of which $\alpha k$
are bad vertices, for some $\beta\leq\alpha\leq1$.
There are $k+1$ layers in graph $G$ and by Lemma \ref{good_vertices}
with probability $1-2^{-\Omega(n)}$ there are $Rk+o(\beta)k$ {layers with} good vertices
in graph $G$. Let us fix the realization of the deletion pattern $D$, the realization of the pivots $P_i$ in $X$ with exactly one deletion, and the realization of the immediate undeleted bits before and after pivots $P_i$ in $X$ with exactly one deletion. 
In this way, good vertices of graph $G$ are fixed.
We consider two cases:
\subsection*{Case 1: $\beta \leq \alpha<\frac{1}{2}$}
For  $\beta \leq \alpha<\frac{1}{2}$, first we fix the layers which have a vertex on the path
$Q$ of length $Rk+o(\beta)k$. Since by assumption, there are $Rk-\alpha k+o(\beta)$ good vertices on the path $Q$,
the selection of good and bad vertices on the path can be done in at most the following number of ways 
 \begin{align}
\binom{Rk+o(\beta)k}{Rk-\alpha k+o(\beta)k}\cdot 2^{o(\beta)k} \cdot\binom{(1-R)k+\alpha k+o(\beta)k}{\alpha k}& <\nonumber\\
2^{k((R+o(\beta))H(\frac{\alpha}{R+o(\beta)})+o(\beta)+(1-R+\alpha+o(\beta))H(\frac{\alpha}{1-R+\alpha+o(\beta)}))} & =\nonumber\\
2^{k(RH(\frac{\alpha}{R})+(1-R+\alpha)H(\frac{\alpha}{1-R+\alpha})+o(\beta))}.\label{delta_1}\end{align}
In the multiplication above, the first term stands for the number of ways we can choose the layers
with good vertices on the path $Q$. By Lemma \ref{two_matches}, with probability $1-2^{-\Omega(n)}$, there are  $o(\beta)k$ pivots with two correct matches in $Y$. Thus, with probability $1-2^{-\Omega(n)}$, there are at most  $o(\beta)k$ layers with two good vertices among the layers with good vertices on $Q$. Therefore, the second term in the multiplication above, is an upper bound on the  number of combinations we can pick one good vertex from each of those layers. The last term stands for the number of ways
we can choose the layers with bad vertices from the remaining available layers.
Notice that the layers with bad vertices can be chosen from all $k+1$ layers except the $Rk-\alpha k +o(\beta) k$ layers which are chosen to have good vertices. 
  Also, the inequality holds by application of the inequality $\binom{N}{\varepsilon N} < 2^{H(\varepsilon) N}$ for $0< \varepsilon<\frac{1}{2}$  and positive integer  $N$, where $H(\cdot)$ is the binary entropy function.

Suppose that path $Q$ has vertices from layers $\Lambda_{i_{1}},\Lambda_{i_{2}},\cdots,\Lambda_{i_{Rk+o(\beta)k}}$.
Let $\mathcal{I}:=\left\{ 1,\cdots,Rk+o(\beta)k\right\}$ be the set of indices of the layers with a vertex on the path $Q$. Let $\mathcal{I}=\mathcal{I}_{g}\cup\mathcal{I}_{b}$
where $\mathcal{I}_{g}$ is the set of indices of layers with good
vertices on $Q$ and $\mathcal{I}_{b}$ is the set of indices of layers with
bad vertices on $Q$ (The sets $\mathcal{I}_{g}$ and $\mathcal{I}_{b}$ are disjoint.).
 That is,  layer $\Lambda_{i_{j}}$ with $j\in\mathcal{I}_{g}$
is a layer with a good vertex on $Q$ and layer $\Lambda_{i_{j}}$ with
$j\in\mathcal{I}_{b}$ is a layer with a bad vertex on $Q$.

Let us express the {path} $Q$ as $s-v_{i_{1}}-v_{i_{2}}-\cdots-v_{i_{Rk+o(\beta)k}}-t$
where $v_{i_{j}}\in\Lambda_{i_{j}}.$ 
Path $Q$ is uniquely identified by the position of the first bit of  its vertices, $\left(\check{v}_{i_{1}},\check{v}_{i_{2}},\cdots,\check{v}_{i_{Rk+o(\beta)k}}\right)$.
Equivalently, if we know the distance between consecutive vertices $\left(\mbox{Dis}(v_{i_j},v_{i_{j+1}}):j\in {\mathcal I} \right)$, we can 
uniquely identify the position of each vertex on the path. Therefore, next we count the 
number of possible values of the distances between consecutive vertices $\left(\mbox{Dis}(v_{i_j},v_{i_{j+1}}):j\in {\mathcal I} \right)$.

Since good vertices are pinned down on the path, the value of $\mbox{Dis}(v_{i_j},v_{i_{j+1}})$ is determined if both $v_{i_j}$ and $v_{i_{j+1}}$
are good vertices. Let us define set ${\mathcal H} \subset {\mathcal I}$ as follows
\[{\mathcal H}=\{j : j \in {\mathcal I}_b \vee (j+1) \in {\mathcal I}_b \}.\]
Therefore $\left(\mbox{Dis}(v_{i_j},v_{i_{j+1}}):j\in {\mathcal H} \right)$ is the set of distances between consecutive vertices of $Q$ that are undetermined. 
The number of bad vertices on $Q$ is $\alpha k$. Therefore   $|{\mathcal H}| \leq 2\alpha k$. 
Let $j_{1}$ and $j_{2}$ with $j_{1}<j_{2}$ be two consecutive elements in the ordered version of $\mathcal{I}_{g}.$   
Then by additivity of distances, bad vertices $v_{i_{j_{1}+1}},\cdots,v_{i_{j_{2}-1}}$
need to satisfy the following constraint: 
 \begin{equation}
\label{eq:main_1}
\sum_{t=j_{1}}^{j_{2}-1}\mbox{Dis}(v_{i_{t}},v_{i_{t+1}})=\mbox{Dis}(v_{i_{j_{1}}},v_{i_{j_{2}}})-(j_{2}-j_{1}-1)L_{P},
\end{equation}
where $(j_{2}-j_{1}-1)L_{P}$ is the total length of the substrings $v_{i_{j_{1}+1}},\cdots,v_{i_{j_{2}-1}}$ in $Y$.  
Furthermore, bad vertices should be placed on $Q$ such that they satisfy the  constraint given in (\ref{edge_condition}). For every $j\in {\mathcal H}$, we need to have 
\begin{equation}
-1\leq \mbox{Dis}(v_{i_j},v_{i_{j+1}})\leq(i_{j+1}-i_{j})L_{S}+(i_{j+1}-i_{j}-1)L_{P}.
\label{eq:main_2}
\end{equation}
Next we find an upper bound on the number of integer vectors $\left(\mbox{Dis}(v_{i_j},v_{i_{j+1}}):j\in {\mathcal H} \right)$
that satisfy (\ref{eq:main_1}) and (\ref{eq:main_2}).

For $j\in \mathcal I$, we use the following 
change of variables 
\[\delta_{j}:=(i_{j+1}-i_{j})L_{S}+(i_{j+1}-i_{j}-1)L_{P}-\mbox{Dis}(v_{i_j},v_{i_{j+1}}).\]
Equation (\ref{eq:main_1}) in terms of the variables $\delta_j$'s is written as follows. For $j_1<j_2$, as any two consecutive
elements in the ordered version of ${\mathcal I}_g$, we have 
\begin{equation}
\label{eq:main_3}
\sum_{j=j_{1}}^{j_{2}-1}\delta_{j}=(i_{j_{2}}-i_{j_{1}})L_{S}+(i_{j_{2}}-i_{j_{1}}-1)L_{P}-\mbox{Dis}(v_{i_{j_{1}}},v_{i_{j_{2}}}).
\end{equation}
Observe that in  (\ref{eq:main_3}),  $(i_{j_{2}}-i_{j_{1}})L_{S}+(i_{j_{2}}-i_{j_{1}}-1)L_{P}$ is the 
number of bits between $P_{i_{j_1}}$ and $P_{i_{j_2}}$ in $X$ and $\mbox{Dis}(v_{i_{j_{1}}},v_{i_{j_{2}}})$ is the number of 
bits between the correct match of $P_{i_{j_1}}$ and the correct match of $P_{i_{j_2}}$ in $Y$. Therefore, the right hand side of 
Equation (\ref{eq:main_3}) is the number of deleted bits in the substring between $P_{i_{j_1}}$ and $P_{i_{j_2}}$ in $X$.
To find an upper bound on the number of solutions  for $(\delta_j: j\in {\mathcal H}),$ we relax 
constraints in (\ref{eq:main_3}) over all $j$'s into a single constraint by adding them together:
\begin{equation}
\sum_{j\in {\mathcal H}} \delta_j = \delta -\sum_{j'\in\mathcal{H}^{c}}\delta_{j'}. 
\end{equation}
Here $\delta$ is the total number of deleted bits from $X$ and set
$\mathcal{H}^{c}=\mathcal{I}\setminus\mathcal{H}$ is the set
of indices $j'$ for which $h_{j'}$ is determined; i.e., $v_{i_{j'}}$
and $v_{i_{j'+1}}$ are both good vertices. Furthermore, $\delta_{j'}$
is the number of deleted bits from the substring between  $P_{i_{j'}}$ and $P_{i_{j'+1}}$ in $X$.

Next, we use the following result on the concentration of $\sum_{j\in {\mathcal H}} \delta_j $
around its expected value. 
\begin{lm}\label{concentration}
For a random string $X$, random deletion pattern $D$, and the resultant string $Y$, the following bound holds:
 \[
\Pr\left\{\left|\sum_{j\in\mathcal{H}}\delta_{j}-\mathbb{E}\left[\sum_{j\in\mathcal{H}}\delta_{j}\right]\right|=o(\beta)k\right\}\geq1-2^{-\Omega(n)}.\]
\end{lm}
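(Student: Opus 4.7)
The plan is to recognize $\sum_{j \in \mathcal{H}} \delta_j$ as the number of deletions in a deterministic subset of the bit positions of $X$, and then to invoke Hoeffding's inequality.

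I would start from the identity
\[
\sum_{j \in \mathcal{H}} \delta_j \;=\; \delta \;-\; \sum_{j' \in \mathcal{H}^c} \delta_{j'},
\]
which the paper derives just before the lemma by summing (\ref{eq:main_3}) over all consecutive pairs in the ordered version of $\mathcal{I}_g$ (with the stretches adjacent to $s$ and $t$ handled analogously). Recall that $\delta$ is the total number of deletions in $X$, and for each $j' \in \mathcal{H}^c$ the quantity $\delta_{j'}$ counts the deletions in the deterministic substring of $X$ lying between the pivots $P_{i_{j'}}$ and $P_{i_{j'+1}}$. Since the stretches indexed by $\mathcal{H}^c$ are pairwise disjoint intervals of $X$ (consecutive segments along the path $Q$ cannot overlap), their complement in $X$ has some fixed length $N' \leq n$, and $\sum_{j \in \mathcal{H}} \delta_j$ is precisely the count of deleted bits within that complement.

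Next, because $D$ deletes each bit of $X$ independently with probability $\beta$, the random variable $\sum_{j \in \mathcal{H}} \delta_j$ is distributed as Binomial$(N', \beta)$, with mean $\mathbb{E}\bigl[\sum_{j \in \mathcal{H}} \delta_j\bigr] = N' \beta$ and summands in $[0,1]$. Applying Theorem \ref{Hoeffding_2} with $M = \beta$, $I = 1$, $N = N'$ yields, for any $\eta > 0$,
\[
\Pr\!\left\{\Bigl|\sum_{j \in \mathcal{H}} \delta_j - \mathbb{E}\Bigl[\sum_{j \in \mathcal{H}} \delta_j\Bigr]\Bigr| \geq \eta k\right\} \;\leq\; 2\exp\!\Bigl(-\tfrac{2\eta^2 k^2}{N'}\Bigr).
\]
Taking $\eta = o(\beta)$ (that is, any function of $\beta$ with $\eta/\beta \to 0$ as $\beta \to 0$) and using $N' \leq n$ together with $k = n\beta + o(1)$, the exponent is bounded below by $2\eta^2 \beta^2 n (1 - o(1))$, which is $\Omega(n)$ for fixed small $\beta$. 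Thus the probability of the large-deviation event is $2^{-\Omega(n)}$, as asserted. (The case $N' = 0$ is trivial, since then $\sum_{j \in \mathcal{H}} \delta_j$ is deterministically zero.)

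I do not anticipate a substantial obstacle: the only nontrivial items are the identity recalled at the start and the disjointness of the stretches indexed by $\mathcal{H}^c$, both of which follow directly from the additive structure of the path $Q$ and the non-overlap of consecutive intervals along it. The concentration step is a textbook Hoeffding bound, and the fact that $\beta$ is a fixed positive constant (independent of $n$) is what guarantees that any polynomial in $\beta$ showing up in the exponent still produces a $2^{-\Omega(n)}$ bound.
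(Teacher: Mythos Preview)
Your argument is correct and in fact cleaner than the route the paper takes. Both proofs rest on the observation that, once the layer indices $i_1,\ldots,i_{Rk+o(\beta)k}$ and the set $\mathcal{H}$ are fixed, the quantity $\sum_{j\in\mathcal{H}}\delta_j$ counts the deletions falling in a \emph{fixed} union of bit positions of $X$. The paper (Appendix~II) records this by writing $\sum_{j\in\mathcal{H}}\delta_j=\sum_{l\in\mathcal{L}_P}\delta_{P_l}+\sum_{l\in\mathcal{L}_S}\delta_{S_l}$, then applies Theorem~\ref{Hoeffding_2} \emph{separately} to the two sums with interval widths $I=L_P$ and $I=L_S$, and finally multiplies the two tail bounds. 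You instead go one level finer: since each $\delta_{P_l}$ and $\delta_{S_l}$ is itself a sum of independent Bernoulli$(\beta)$ indicators, the whole expression is Binomial$(N',\beta)$ for some $N'\le n$, and a single application of Hoeffding with $I=1$ suffices. This buys a simpler calculation and a somewhat larger exponent (order $o(\beta^4)n$ rather than the paper's $o(\beta^5)n$), though either is $\Omega(n)$ for fixed $\beta$. Your use of the displayed identity $\sum_{j\in\mathcal{H}}\delta_j=\delta-\sum_{j'\in\mathcal{H}^c}\delta_{j'}$ from the paper is fine as motivation, but strictly speaking you do not even need it: the binomial structure follows directly from the decomposition of $\sum_{j\in\mathcal{H}}\delta_j$ into pivot- and segment-deletions that the paper gives at the start of Appendix~II.
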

\begin{proof}
See Appendix II.
\end{proof}
To estimate $\mathbb{E}\left[\sum_{j\in\mathcal{H}}\delta_{j}\right]$, first notice that the average number of deleted bits from $X$ is ${\mathbb E}\left[\delta\right]=n\beta=(1+o(\beta))k$.

Next we find $\mathbb{E}\left[\delta_{j'}\right]$ for $j'\in\mathcal{H}^{c}.$
Since $Q$ has $Rk+o(\beta)k$ vertices, the average size of the substring
between $P_{i_{j'}}$ and $P_{i_{j'+1}}$ in $X$ is $\frac{n}{Rk+o(\beta)k}.$
Therefore, $\mathbb{E}\left[\delta_{j'}\right]$, the average number of deleted
bits from the substring between $P_{i_{j'}}$ and $P_{i_{j'+1}}$
in $X$ is \[
\mathbb{E}\left[\delta_{j'}\right]=\frac{n\beta}{Rk+o(\beta)k}=\frac{1}{R+o(\beta)}=1+\beta L_{P}-2\beta+o(\beta).\]
 Since $|\mathcal{H}|\leq2\alpha k$ and $|\mathcal{I}|=|\mathcal{H}|+|\mathcal{H}^{c}|=(R+o(\beta))k$,
we find that \[
|\mathcal{H}^{c}|\geq(R-2\alpha+o(\beta))k=(1-\beta L_{P}+2\beta-2\alpha+o(\beta))k.\]
 We conclude that 
\begin{align*}
 \mathbb{E}[\sum_{j\in\mathcal{H}}\delta_{j}] & = \mathbb{E}[\delta] - \mathbb{E}[\sum_{j' \in {\mathcal H}^c}\delta_{j'}]\\
 & =k-|\mathcal{H}^{c}|\mathbb{E}\left[\delta_{j'}\right]+o(\beta)k\\
 & \leq k\cdot\\
& (1-(1-\beta L_{P}+2\beta-2\alpha)(1+\beta L_{P}-2\beta)+o(\beta))\\
 & =2\alpha k(1+\beta L_{P}-2\beta)+o(\beta)k,\end{align*}
and therefore by Lemma \ref{concentration}, with probability at least $1-2^{-\Omega(n)}$
\begin{equation}
\label{sum_delta}
\sum_{j\in\mathcal{H}}\delta_{j}=2\alpha k(1+\beta L_{P}-2\beta)+o(\beta)k.
\end{equation}
Therefore, we showed that  (\ref{eq:main_1}) yields the weaker constraint in $\eqref{sum_delta}$ on the vector $(\delta_j: j\in {\mathcal H})$.  

Now consider the inequality in (\ref{eq:main_2}). We can rewrite it in terms of $\delta_j$ as follows
\[
0\leq\delta_{j}\leq(i_{j+1}-i_{j})L_{S}+(i_{j+1}-i_{j}-1)L_{P}+1.\]
 To find an upper bound on the number of solutions for $(\delta_j: j\in {\mathcal H})$, we relax the preceding constraint
to $\delta_{j}\geq 0.$ 

 Under the constraint that $\delta_{j}\geq0,$ the number of integer solutions
for $(\delta_{j}:j\in {\mathcal H})$ under the condition \eqref{sum_delta}, is given by
\begin{align}
\binom{2\alpha k(1+\beta L_{P}-2\beta)+o(\beta)k+|\mathcal{H}|-1}{|\mathcal{H}|-1} & \leq \nonumber\\
\binom{2\alpha k(2+\beta L_{P}-2\beta+\frac{o(\beta)}{\alpha})}{2\alpha k} & \leq \nonumber\\
2^{2\alpha k(2+\beta L_{P}-2\beta+\frac{o(\beta)}{\alpha})} & \leq2^{5\alpha k},\label{delta_2}
\end{align}
 where the last estimate holds for sufficiently small $\beta.$ 

Given the number of possibilities for path $Q$, we next compute the probability of occurrence of each realization of path $Q$.
Since $X$ is generated by an i.i.d. Bernoulli source of parameter $\frac{1}{2}$ and deletions occur independently, $Y$ is also generated by an i.i.d. Bernoulli source of parameter $\frac{1}{2}$ and  different substrings of $Y$ are independent. Therefore, the probability of  any given realization of  bad vertices  as specified
by the choice of $\delta_{j}$'s is $2^{-L_{P}\alpha k}$. By applying the union bound on the probability of existence of individual paths, and using inequalities \eqref{delta_1} and \eqref{delta_2}, we conclude that the probability
of the existence of a path $Q$ with $Rk+o(\beta)k$ total vertices and $\alpha k$ bad vertices  is upper bounded
by $2^{\Delta_{\alpha}k}$ where
\begin{align*}
\Delta_{\alpha} & =RH\left(\frac{\alpha}{R}\right)+(1-R+\alpha)H\left(\frac{\alpha}{1-R+\alpha}\right)\\
 & \quad+5\alpha- L_{P}\alpha+o(\beta)\\
 & =-\alpha\log\alpha+\alpha\log R-R\log(1-\frac{\alpha}{R})\\
 & \quad+\alpha\log(1-\frac{\alpha}{R})-\alpha\log\alpha+\alpha\log(1-R+\alpha)\\
 & \quad-(1-R)\log(1-\frac{\alpha}{1-R+\alpha})+5\alpha-L_{P}\alpha +o(\beta).\end{align*}
 Next we find an upper bound for $\Delta_{\alpha}.$ 
Since for sufficiently small $\beta$, $R<1,$ then $\alpha\log R<0.$
Since $\alpha < \frac{1}{2}$,
for small enough $\beta$, $R>\alpha$. 
Therefore $\alpha\log(1-\frac{\alpha}{R})<0$ and $\alpha\log(1-R+\alpha)<0.$
Using the inequality $\log(1+x)\leq\frac{x}{\ln2}$ for $x>-1$ we
find that 
\begin{align*}
-R\log(1-\frac{\alpha}{R})=R\log(1+\frac{\alpha}{R-\alpha}) &\leq\frac{R\alpha}{(R-\alpha)\ln2} \\
                                                                                                 &= \frac{\alpha}{(1-\frac{\alpha}{R})\ln2}.
\end{align*}
Notice that for small values of $\beta,$ $R$ is close to $1$. Let us assume that $\beta$ is sufficiently small such that $R> 0.9$.
Since $\alpha < \frac{1}{2}$, we can write
\[-R\log(1-\frac{\alpha}{R}) \leq \frac{\alpha}{(1-\frac{\alpha}{R})\ln2} < \frac{\alpha}{(1-\frac{1}{1.8})\ln 2} = \frac{2.25 \alpha}{\ln 2}.\]
Also we have 
\begin{align*}
-(1-R)\log\left(1-\frac{\alpha}{1-R+\alpha}\right) & =(1-R)\log\left(1+\frac{\alpha}{1-R}\right)\\
 & \leq\frac{(1-R)\alpha}{(1-R)\ln2}=\frac{\alpha}{\ln2}.\end{align*}
  Therefore \begin{align*}
\Delta_{\alpha} & \leq o(\beta)-2\alpha\log\alpha+\frac{2.25\alpha}{\ln2}+\frac{\alpha}{\ln2}+5\alpha- L_{P}\alpha\\
 & =\alpha(\frac{o(\beta)}{\alpha}-2\log\alpha+\frac{3.25}{\ln2}+5-L_{P})\\
& < \alpha(\frac{o(\beta)}{\alpha}-2\log\alpha+9.7-L_{P}).
\end{align*}
Notice that $\frac{o(\beta)}{\alpha}\leq\frac{o(\beta)}{\beta}$  is arbitrarily small for sufficiently small 
 $\beta$. If we choose $\beta$ such that $\frac{o(\beta)}{\alpha} <0.3$, then we have
\[\Delta_{\alpha} <  \alpha( -2\log\alpha+10 - L_{P}).\]  
\subsection*{Case 2: $\frac{1}{2} \leq \alpha \leq R+o(\beta)$}
We again seek to bound the probability of the existence of a path $Q$ from $s$
to $t$ with $Rk+o(\beta)k$ total vertices and $\alpha k$ bad vertices.
Let the path $Q$ be denoted by $s-v_{i_{1}}-v_{i_{2}}-\cdots-v_{i_{Rk+o(\beta)k}}-t$
and let $\delta_{j}$ denote the number of deleted bits between {vertices}
$v_{i_{j}}$ and $v_{i_{j+1}}.$ Clearly, the sum of $\delta_{j}$ is
the total number of deletions in string $Y$. By Theorem \ref{Hoeffding}, with probability at least $1-2^{-\Omega(n)}$ we have 
$\sum_{j=0}^{Rk+o(\beta)k}\delta_{j} = n\beta+ n\beta o(\beta) = k(1+o(\beta))$.
The number of integer solutions for $\delta_{j}\geq0$ under {this} constraint is
\[
\binom{k+Rk+o(\beta)k-1}{Rk+o(\beta)k-1}\leq\binom{2k}{k}\leq 2^{2k}.\]
 The probability for each solution of $\delta_{j}$'s to represent
a valid $s-t$ path is at most $2^{-L_{P}\alpha k}$. Therefore, an
upper bound on the probability of existence of  a path $Q$ in this
case is $2^{(2-L_{P}\alpha)k}.$

Finally, putting both cases for the range of $\alpha$ together, the probability of the existence of a path $Q$ with $Rk+o(\beta)k$
vertices between $s$ and $t$ with at least $\beta k$ bad vertices can
be upper bounded by the sum of two integrals:
\begin{align*}
 & \int_{\alpha=\beta}^{\frac{1}{2}}2^{\Delta_{\alpha}k}d\alpha k+\int_{\alpha=\frac{1}{2}}^{R+o(\beta)}2^{(2-L_{P}\alpha)k}d\alpha k\leq\\
 & \int_{\alpha=\beta}^{\frac{1}{2}}2^{(-2\log\alpha+10-L_{P})\alpha k}d\alpha k+\int_{\alpha=\frac{1}{2}}^{R+o(\beta)}2^{(2-L_{P}\alpha)k}d\alpha k.\end{align*}
 If we pick 
$L_{P}\geq 11+2\log\frac{1}{\beta}$ 
then we find  \[(-2\log\alpha+10-L_{P})\alpha k \leq -\alpha k\leq-\beta k\] 
for $\beta\leq \alpha\leq\frac{1}{2}.$ Also, 
\[2-L_P \alpha  \leq 2-\frac{1}{2} \cdot 11 =-3.5. \]
Therefore, we can upper bound the sum of the two integrals by
\begin{align*}
\int_{\beta}^{\frac{1}{2}}k2^{-\beta k}d\alpha+\int_{\frac{1}{2}}^{R+o(\beta)}k2^{-3.5k}d\alpha\leq\frac{k}{2}(2^{-\beta k}+2^{-3.5k})\\
=2^{-\Omega(n)}.\end{align*}
This completes the proof of Theorem \ref{thm:For-a-random}.
\end{proof}
\begin{figure*}
\centering
\includegraphics[scale=0.35]{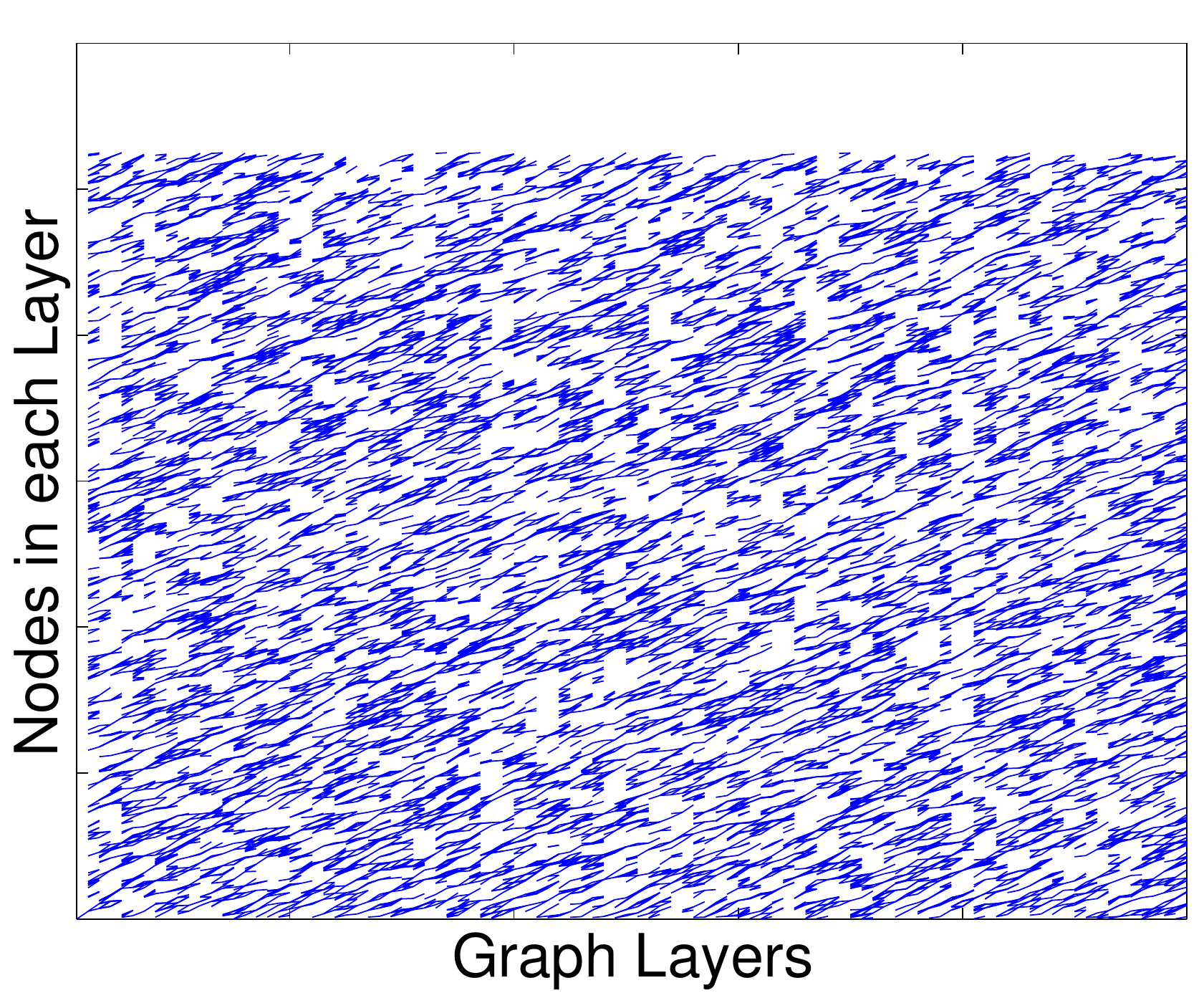}
\includegraphics[scale=0.35]{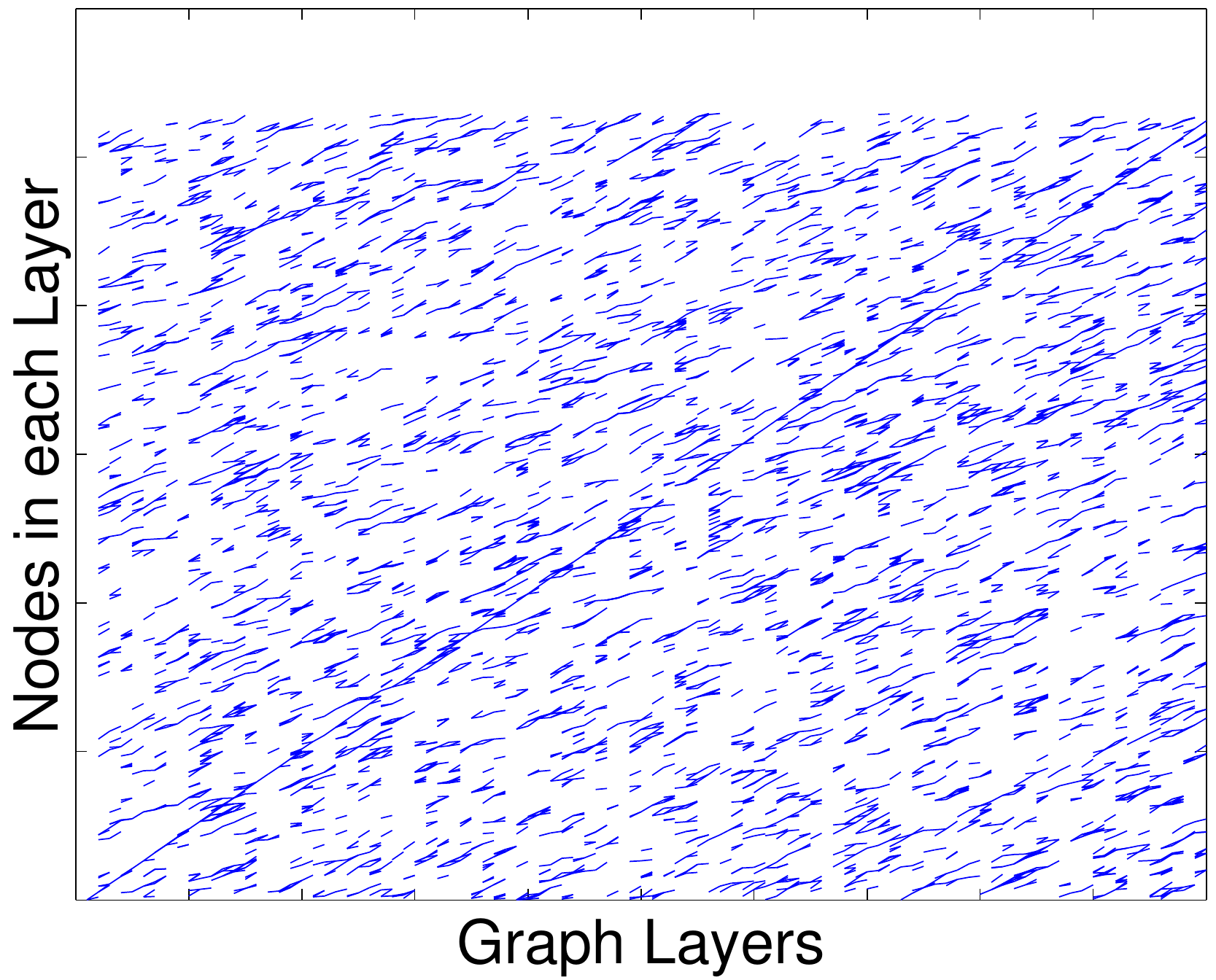}
\includegraphics[scale=0.35]{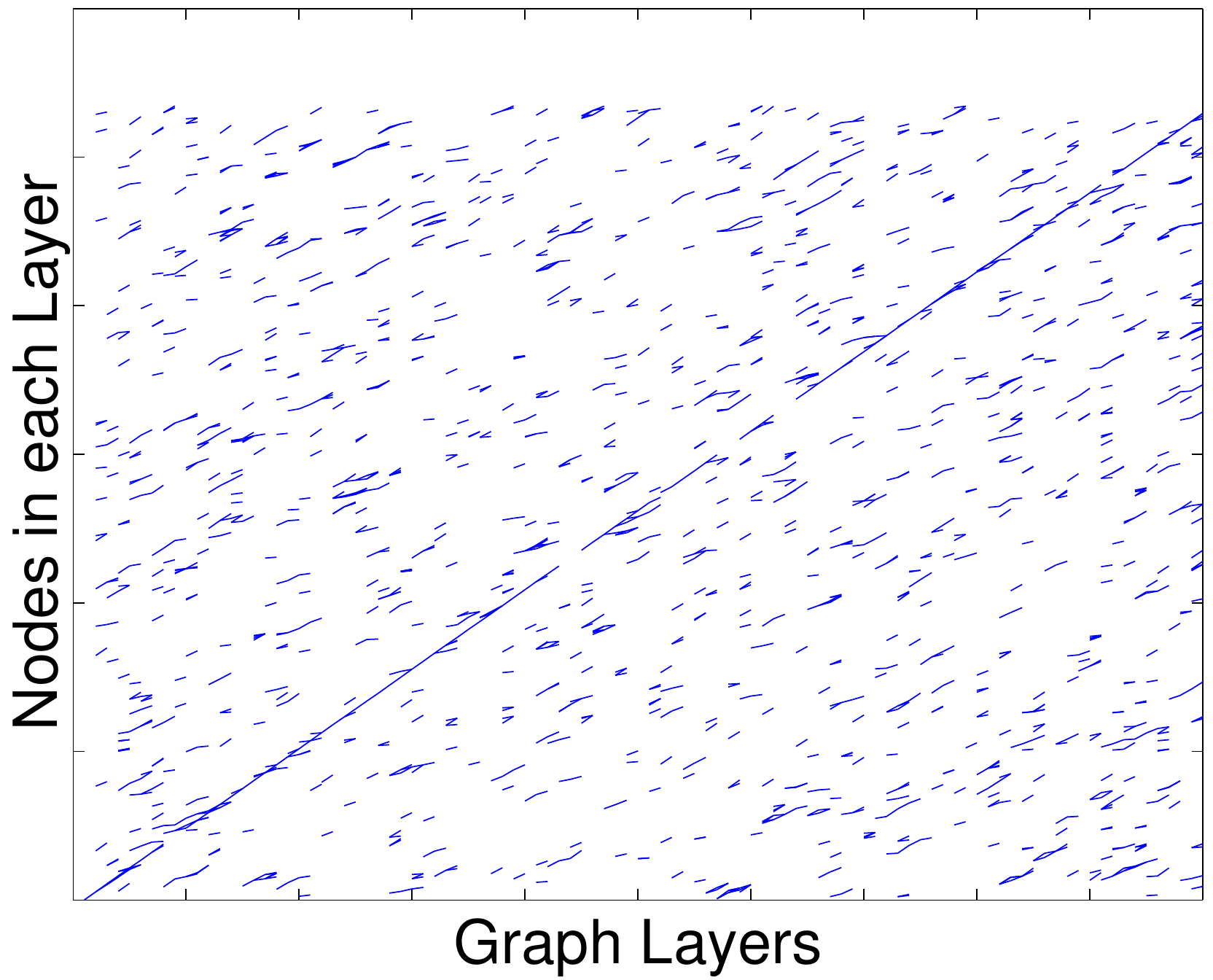}
\caption{\label{fig:Graph--for}Graph $G$ for $k=100,\beta=0.01,L_{S}=100,$
and $L_{P}=6,7$, and $8$, where only the edges between consecutive layers
are depicted.}
\end{figure*}
In order to verify the result of Theorem \ref{thm:For-a-random} in
a practical setting, we have plotted graph $G$ for a randomly generated string
$X$ and a randomly generated deletion pattern $D$ with parameter $\beta=0.01,$
for {three} values of $L_{P}$ in Figure \ref{fig:Graph--for}. To avoid
visual complications, we have only plotted edges that connect vertices
on two consecutive layers. As it is clear from the figure, for small
values of $L_{P},$ there are many edges in the graph and there are
potentially many paths that connect $s$ to $t$ which do not share
many vertices with the correct path. However, for larger values of $L_{P},$
the irrelevant edges disappear from the graph and the only path that
remains is the one formed by good vertices of the graph. For $\beta=0.01$,
Theorem \ref{thm:For-a-random} states that $L_{P}\geq 11+2\log\frac{1}{\beta}\approx17$
is sufficient for our purpose. In practice, we observe values
of $L_{P}$ around $8$ are sufficient for distinguishing good vertices
on graph $G.$
\section{\label{sec:Practical-Algorithms}Practical Implementation}
In this section we discuss practical implementation of our synchronization
protocol, consisting of a matching module, a deletion recovery module, and an LDPC decoder module (see Figure \ref{scheme}).  
For the deletion recovery module, we can implement the synchronization protocol of Venkataramanan {\em et al.} \cite{Venkataramanan_Zhang_Ramchandran} 
which runs in linear time in $|F_j|$ for deletion recovery of each substring $F_j,1\leq  j\leq k'$.  Therefore the overall complexity of the deletion recovery module is linear in $n$. 
For the LDPC decoder module there are many sophisticated encoding and decoding schemes (see \cite{Richardson_Urbanke_encoding,Richardson_Urbanke}) that need running  time
linear  in $n$.

In this section we therefore focus on the implementation
of the graph-based algorithm for
the matching  module explained in the previous section.
The result of Theorem \ref{thm:For-a-random} indicates that to find
a large number of correct matches for pivots in the received string $Y$,
it suffices to find an $s-t$ path with $Rk+o(\beta)k$ vertices in the  matching  {graph}
$G.$ We now argue that this  problem can be cast as the well
known {}``shortest path problem'' in a directed graph, so it can be
efficiently solved in polynomial time.

As the first step, we only keep the vertices in {graph} $G$ which have an edge
to vertex $t$ and remove all other vertices. Since all good
vertices are connected to vertex $t$, this step does not eliminate any good
vertex from {graph} $G$. Let $\tilde G$ denote the resulting graph. As the second step, we find the longest $s-t$ path in
$\tilde G$. Since all good vertices are connected together and form an $s-t$ path of length $Rk+o(\beta)k$, the longest path in $\tilde G$  has at least $Rk+o(\beta)k$ vertices. Finally,
we modify the discovered path into a path with only $Rk+o(\beta)k$
vertices by keeping  only the first $Rk+o(\beta)k$
vertices on the path. Since each vertex in  graph $\tilde G$ has an edge to {vertex} $t$,
the resulting vertices from this step form a path with $Rk+o(\beta)k$ vertices
from $s$ to $t$.

The only step of the above procedure which is computationally demanding
is the second step for finding the longest $s-t$ path in $\tilde G$. Notice that since $G$ and hence $\tilde G$ are acyclic graphs,  the longest $s-t$ path problem in $\tilde G$, can
be reduced to the shortest $s-t$ path problem in $\tilde G$ by assigning weight $-1$ to each edge. The latter problem is solvable in time $O(|{\tilde G}|^{2})$,
for instance by Dijkstra's algorithm \cite{Dijkstra}, where $|{\tilde G}|$
is the number of vertices in ${\tilde G}.$ We upper bound $|{\tilde G}|$ by $|G|$. To approximate $|G|,$ we notice
that there are  $n\beta + o(1)$ layers in {graph} $G$ and the number of vertices
in {layer} $\Lambda_{i}$ is the number of copies of pivot $P_{i}$
in $Y$, which is on average $2^{-L_{P}}|Y|=O({\beta^2}n).$ Therefore, $ |G| = O((\beta^2 n)\cdot(n\beta) + o(1))=O(n^2\beta^3).$ We conclude that
the complexity of matching pivots in graph $G$ is upper bounded by $O(|G|^{2})=O(n^4\beta^6).$
\section{Conclusions} \label{conclusion}
In this paper we offered the first synchronization protocol for recovering from a small rate of deletions
with an optimal order of transmitted bits and with exponentially small reconstruction error.
The main idea was to divide the synchronization problem into synchronization between shorter substrings of the source file and the destination file.
For that, our protocol sends equally spaced small substrings of the source file to the destination, and destination then uses a graph theoretic
algorithm to locate the short substrings within its file with high accuracy. For synchronization between the shorter substrings we used existing 
protocols that recover from a small number of edits. We observed that the compound output of the first two steps can be modeled as an output of a BSC with a
small error probability. This error can be recovered with a low bit error rate by using an LDPC coding scheme.

While in this work we only considered recovering from i.i.d. patterns of deleted bits, there are many other interesting edit models that the ideas of this paper can be applied to. 
An immediate extension of our work is to the synchronization from i.i.d. insertions. To explain an i.i.d. insertion process, let us consider an equivalent description of the i.i.d. deletion process
considered in this paper. In the new description, the deletion pattern $D$ is described as an independent sequence of positive integers, where the integers alternatively represent the length 
of  zero and one runs in the deletion pattern $D$. It is easy to verify that if the integers are generated independently according to an appropriate geometric distribution, the result 
is an i.i.d. 0-1 deletion pattern. We can describe the insertion pattern in the same way by generating the run length sequence of the pattern. For the insertion pattern, each run of ones corresponds to 
an inserted substring of equal length generated by an i.i.d. Bernoulli process. Also, each run of zeros corresponds to a substring of the input string of equal length in the output. It is not hard to see that the solution presented in this paper for synchronization from deletions is directly  applicable to solving the synchronization problem from random insertions.
One can also consider more general patterns of deletions or insertions, e.g., the 0-1 deletion (insertion) patterns that follow a Markov chain random process (see \cite{Ma_Ramachandran_Tse}).

Another interesting direction for the extension of  this work is the design of synchronization protocols that are capable of recovering from a small rate of both deletions and insertions. While the deletion recovery module 
in our work, based on the algorithm by Venkataramanan {\em et al.} \cite{Venkataramanan_Zhang_Ramchandran}, is directly applicable to recovery from deletions and insertions, the main challenge is to extend 
the graph theoretic algorithm for matching the pivot substrings in the received string $Y$ when there are both deletions and insertions. Again, many parts of our argument still hold for the new setting as long as 
the edits happen with small rates while some technical parts may need to be modified. This extension is the focus of our current research. Our recent progress is reported in \cite{Lara_Behzad}.

There are some other aspects of our current research that can be modified into more efficient synchronization protocols. For example, our algorithm needs a small backward bandwidth from node $B$ to node $A$ in the deletion recovery module. This bandwidth is an inherent component of the synchronization protocol  of Venkataramanan {\em et al.}, \cite{Venkataramanan_Zhang_Ramchandran}. It is of great interest to design protocols that 
can operate on forward links only. As proved by Orlitsky \cite{Orlitsky}, design of optimal protocols for recovery from deletions on forward links implies optimal protocols for recovery from deletions and insertions. Furthermore, such protocols can be implemented as efficient channel codes for communicating over edit channels (see \cite{Ma_Ramachandran_Tse, Kanoria_Montanari, Kalai_Mitzenmacher_Sudan, Mitzenmacher}).

Finally, from a practical perspective, it is interesting to design a more efficient implementation of the graph theoretic matching algorithm which is at the heart of our matching module. While our algorithm runs in $O(n^4\beta^6)$ time, we believe that by exploiting the specific structure of the matching graph, and applying additional restrictions on the connectivity of the vertices of the graph together, it is possible to considerably reduce the running time of the matching module and hence reduce the overall complexity of the synchronization protocol.     

\section*{Acknowledgement}
The work is supported in part by NSF CAREER grant no. CCF-1150212, gift from Intel Corporation, Okawa Research Grant and Intel Early Career Award. L. Dolecek acknowledges helpful discussions with Nicolas Bitouze.
\bibliographystyle{IEEEtran}
\bibliography{mmor_refrence_v3}

% Generated by IEEEtran.bst, version: 1.13 (2008/09/30)
\begin{thebibliography}{10}
\providecommand{\url}[1]{#1}
\csname url@samestyle\endcsname
\providecommand{\newblock}{\relax}
\providecommand{\bibinfo}[2]{#2}
\providecommand{\BIBentrySTDinterwordspacing}{\spaceskip=0pt\relax}
\providecommand{\BIBentryALTinterwordstretchfactor}{4}
\providecommand{\BIBentryALTinterwordspacing}{\spaceskip=\fontdimen2\font plus
\BIBentryALTinterwordstretchfactor\fontdimen3\font minus
  \fontdimen4\font\relax}
\providecommand{\BIBforeignlanguage}[2]{{%
\expandafter\ifx\csname l@#1\endcsname\relax
\typeout{** WARNING: IEEEtran.bst: No hyphenation pattern has been}%
\typeout{** loaded for the language `#1'. Using the pattern for}%
\typeout{** the default language instead.}%
\else
\language=\csname l@#1\endcsname
\fi
#2}}
\providecommand{\BIBdecl}{\relax}
\BIBdecl

\bibitem{Varshamov_Tenegolts}
R.~R. Varshamov and G.~M. Tenengolts, ``Codes which correct single asymmetric
  errors (in {R}ussian),'' \emph{Avtomatika i Telemekhanika}, vol.~26, pp.
  288--292, 1965.

\bibitem{Levenshtein}
V.~I. Levenshtein, ``Binary codes capable of correcting deletions, insertions
  and reversals (in {R}ussian),'' \emph{Soviet Physics Doklady}, vol. 163,
  no.~4, pp. 845--848, 1965.

\bibitem{Orlitsky}
A.~Orlitsky, ``Interactive communication of balanced distributions and of
  correlated files,'' \emph{SIAM Journal on Discrete Mathematics}, vol.~6,
  no.~4, pp. 548--564, 1993.

\bibitem{Cormode_Paterson_Sahinalp}
G.~Cormode, M.~Paterson, S.~C. Sahinalp, and U.~Vishkin, ``Communication
  complexity of document exchange,'' in \emph{Proc. of the 11th Annual ACM-SIAM
  Symposium on Discrete Algorithms (SODA)}, San Francisco, CA, USA, Jan. 2000,
  pp. 197--206.

\bibitem{Evfimievski}
A.~V. Evfimievski, ``A probabilistic algorithm for updating files over a
  communication link,'' in \emph{Proc. of the 9th annual ACM-SIAM Symposium on
  Discrete Algorithms (SODA)}, San Francisco, CA, USA, Jan. 1998, pp. 300--305.

\bibitem{Orlitsky_Viswanathan}
A.~Orlitsky and K.~Viswanathan, ``Practical protocols for interactive
  communication,'' in \emph{Proc. of IEEE International Symposium on
  Information Theory (ISIT)}, Washington, DC, USA, Jun. 2001, p. 115.

\bibitem{Venkataramanan_Zhang_Ramchandran}
R.~Venkataramanan, H.~Zhang, and K.~Ramchandran, ``Interactive low-complexity
  codes for synchronization from deletions and insertions,'' in \emph{Proc. of
  the 48th Annual Allerton Conference on Communication, Control, and
  Computing}, Monticello, IL, USA, Sep.-Oct. 2010, pp. 1412--1419.

\bibitem{Tridgell}
A.~Tridgell, ``Efficient algorithms for sorting and synchronization,'' Ph.D.
  dissertation, Australian National University, 2000.

\bibitem{Suel_Noel_Trendafilov}
T.~Suel, P.~Noel, and D.~Trendafilov, ``Improved file synchronization
  techniques for maintaining large replicated collections over slow networks,''
  in \emph{Proc. of the 20th International Conference on Data Engineering
  (ICDE)}, Boston, MA, USA, Mar. -Apr. 2004, pp. 153--164.

\bibitem{Zhang_Yeo_Ramachandran}
H.~Zhang, C.~Yeo, and K.~Ramchandran, ``{VSYNC: a novel video file
  synchronization protocol},'' in \emph{Proc. of the 16th ACM International
  Conference on Multimedia}, Vancouver, BC, Canada, Oct. 2008, pp. 757--760.

\bibitem{Ma_Ramachandran_Tse}
N.~Ma, K.~Ramchandran, and D.~Tse, ``Efficient file synchronization: a
  distributed source coding approach,'' in \emph{Proc. of the IEEE
  International Symposium on Information Theory (ISIT)}, St. Petersburg,
  Russia, Jul.-Aug. 2011, pp. 583--587.

\bibitem{Slepian_Wolf}
D.~Slepian and J.~K. Wolf, ``Noiseless coding of correlated information
  sources,'' \emph{IEEE Transactions on Information Theory}, vol.~19, no.~4,
  pp. 471--480, Jul. 1973.

\bibitem{Wyner_Ziv}
A.~D. Wyner and J.~Ziv, ``The rate-distortion function for source coding with
  side information at the decoder,'' \emph{IEEE Transactions on Information
  Theory}, vol.~22, no.~1, pp. 1--10, Jan. 1976.

\bibitem{Sadegh_ISTC}
S.~M.~S. Tabatabaei~Yazdi and L.~Dolecek, ``Synchronization from deletions
  through interactive communication,'' in \emph{Proc. of IEEE 7th International
  Symposium on Turbo Codes and Iterative Information Processing (ISTC)},
  Gothenburg, Sweden, Aug. 2012, pp. 66--70.

\bibitem{Richardson_Urbanke_encoding}
T.~Richardson and R.~Urbanke, ``Efficient encoding of low-density parity-check
  codes,'' \emph{IEEE Transactions on Information Theory}, vol.~47, no.~2, pp.
  638--656, Feb. 2001.

\bibitem{Richardson_Urbanke}
------, ``The capacity of low-density parity-check codes under message-passing
  decoding,'' \emph{IEEE Transactions on Information Theory}, vol.~47, no.~2,
  pp. 599--618, Feb. 2001.

\bibitem{Hoeffding}
W.~Hoeffding, ``\BIBforeignlanguage{English}{Probability inequalities for sums
  of bounded random variables},'' \emph{\BIBforeignlanguage{English}{Journal of
  the American Statistical Association}}, vol.~58, no. 301, pp. 13--30, Mar.
  1963.

\bibitem{Dijkstra}
E.~W. Dijkstra, ``A note on two problems in connexion with graphs,''
  \emph{Numerische Mathematik}, vol.~1, pp. 269--271, Jan. 1959.

\bibitem{Lara_Behzad}
N.~Bitouze and L.~Dolecek, ``Synchronization from insertions and deletions
  under a non-binary, non-uniform source,'' in \emph{Proc. of IEEE
  International Symposium on Information Theory (ISIT)}, Istanbul, Turkey, Jul.
  2013, pp. 2920--2924.

\bibitem{Kanoria_Montanari}
Y.~Kanoria and A.~Montanari, ``On the deletion channel with small deletion
  probability,'' in \emph{Proc. of IEEE International Symposium on Information
  Theory (ISIT)}, Austin, TX, USA, Jun. 2010, pp. 1002--1006.

\bibitem{Kalai_Mitzenmacher_Sudan}
A.~Kalai, M.~Mitzenmacher, and M.~Sudan, ``Tight asymptotic bounds for the
  deletion channel with small deletion probabilities,'' in \emph{Proc. of IEEE
  International Symposium on Information Theory (ISIT)}, Austin, TX, USA, Jun.
  2010, pp. 997--1001.

\bibitem{Mitzenmacher}
M.~Mitzenmacher, ``A survey of results for deletion channels and related
  synchronization channels,'' \emph{Probability Surveys}, vol.~6, pp. 1--33,
  2009.

\end{thebibliography}
\section*{Appendix I}
Here we evaluate $\mathbb{E}\left[\delta_{j}\log|F_{j}|\right]$.
\begin{align*}
\mathbb{E}\left[\delta_{j}\log|F_{j}|\right] & =\sum_{l}\Pr\left\{|F_{j}|=l\right\}\mathbb{E}\left[\delta_{j}\log|F_{j}|\bigl||F_{j}|=l\right]\\
 & =\sum_{l}\beta l\log l\Pr\left\{|F_{j}|=l\right\}\\
 & =\mathbb{E}\left[\beta|F_{j}|\log|F_{j}|\right].\end{align*}
 Next we estimate $\mathbb{E}\left[\beta|F_{j}|\log|F_{j}|\right].$ Recall that
$F_{j}$ is the substring of $X$ between $P_{i_{j-1}}$ and $P_{i_{j}}.$
There are $(i_{j}-i_{j-1})$ segment strings and $(i_{j}-i_{j-1}-1)$
pivot strings between $P_{i_{j-1}}$ and $P_{i_{j}}.$ Therefore \[
|F_{j}|=(i_{j}-i_{j-1})L_{S}+(i_{j}-i_{j-1}-1)L_{P}.\]
 There is a total of $k$ pivots, and $k'$ of them are matched by the matching module. Therefore, with probability $p:=\frac{k'}{k}=(1-L_{P}\beta+2\beta+o(\beta))$  pivot $P_i$ is matched.
Furthermore, the probability that
a pivot is matched is independent of other pivots. Thus, $(i_{j}-i_{j-1})$
has the following geometric distribution\[
\Pr\left\{ i_{j}-i_{j-1}=\ell \right\} =p(1-p)^{\ell-1}.\]
Suppose $r\in\left\{ 1,2,\cdots\right\} $ is a random variable distributed
as above. If we upper bound $|F_{j}|\leq(i_{j}-i_{j-1})(L_{S}+L_{P})$,
then \begin{align*}
\mathbb{E}\left[\beta|F_{j}|\log|F_{j}|\right] & \leq\mathbb{E}\left[\beta r(L_{S}+L_{P})\log r(L_{S}+L_{P})\right]\\
 & =\beta(L_{S}+L_{P})\mathbb{E}\left[r\log r+r\log(L_{S}+L_{P})\right]\\
 & \leq2\mathbb{E}\left[r^{2}\right]+2\log(L_{S}+L_{P})\mathbb{E}\left[r\right],\end{align*}
where we used the fact that  $\beta(L_{S}+L_{P})\leq2$
and $r\log r\leq r^{2}.$ We can write \[
\mathbb{E}\left[r\right]=\frac{1}{p},\mathbb{E}\left[r^{2}\right]=\mbox{Var}(r)+\mathbb{E}\left[r\right]^{2}=\frac{2-p}{p^{2}}.\]
Also, we use $\log(L_{S}+L_{P})\leq\log 2L_{S}\leq 2\log\frac{1}{\beta}$
and find that \begin{align*}
\mathbb{E}\left[\beta|F_{j}|\log|F_{j}|\right] & \leq\frac{4-2p}{p^{2}}+\frac{4}{p}\log\frac{1}{\beta}\leq16+8\log\frac{1}{\beta},\end{align*}
where we used the fact that  $\frac{4-2p}{p^{2}}\leq16$ and $\frac{4}{p} \leq 8$ for  $p\geq\frac{1}{2}$  (Notice that $p\rightarrow 1$, as $\beta \rightarrow 0$.).
\section*{Appendix II}
Recall that $\delta_{j}$ is the number of deleted bits from the substring
of $X$ between pivots $P_{i_{j}}$ and $P_{i_{j+1}}.$ Let us denote
by $\mathcal{L}_{P}$ the set of indices $l$ for which $P_{l}$ appears
between $P_{i_{j}}$ and $P_{i_{j+1}}$ for some $j\in\mathcal{H}.$
Similarly, let $\mathcal{L}_{S}$ denote the set of indices $l$ for
which $S_{l}$ appears between $P_{i_{j}}$ and $P_{i_{j+1}}$ for
some $j\in\mathcal{H}.$ Let $\delta_{P_{l}}$ denote the number of
deleted bits from $P_{l}$ and $\delta_{S_{l}}$ denote the number
of deleted bits from $S_{l}$. We can write \[
\sum_{j\in\mathcal{H}}\delta_{j}=\sum_{l\in\mathcal{L}_{P}}\delta_{P_{l}}+\sum_{l\in\mathcal{L}_{S}}\delta_{S_{l}}.\]
 Notice that the length of the interval that $\delta_{P_{l}}$ takes
values from is $L_{P}=O(\log\frac{1}{\beta})$ and the length of the
interval that $\delta_{S_{l}}$ takes values from is $L_{S}=\frac{1}{\beta}.$
Next, by application of Theorem \ref{Hoeffding_2} we can write \begin{align*}
 & \Pr\left\{ \left|\sum_{j\in\mathcal{H}}\delta_{j}-\mathbb{E}\left[\sum_{j\in\mathcal{H}}\delta_{j}\right]\right|=o(\beta)k\right\} \geq\\
 & \Pr\left\{ \left|\sum_{l\in\mathcal{L}_{P}}\delta_{P_{l}}-\mathbb{E}\left[\sum_{l\in\mathcal{L}_{P}}\delta_{P_{l}}\right]\right| \right.+ \\
& \left. \left|\sum_{l\in\mathcal{L}_{S}}\delta_{S_{l}}-\mathbb{E}\left[\sum_{l\in\mathcal{L}_{S}}\delta_{S_{l}}\right]\right|=o(\beta)k\right\} =\\
 & \Pr\left\{ \left|\sum_{l\in\mathcal{L}_{P}}\delta_{P_{l}}-\mathbb{E}\left[\sum_{l\in\mathcal{L}_{P}}\delta_{P_{l}}\right]\right|=o(\beta)k\right\} \cdot\\
 & \qquad\qquad\Pr\left\{ \left|\sum_{l\in\mathcal{L}_{S}}\delta_{S_{l}}-\mathbb{E}\left[\sum_{l\in\mathcal{L}_{S}}\delta_{S_{l}}\right]\right|=o(\beta)k\right\} =\\
 & \Pr\left\{ \left|\sum_{l\in\mathcal{L}_{P}}\delta_{P_{l}}-\mathbb{E}\left[\sum_{l\in\mathcal{L}_{P}}\delta_{P_{l}}\right]\right|=\frac{o(\beta)k}{|\mathcal{L}_{P}|}|\mathcal{L}_{P}|\right\} \cdot\\
 & \qquad\qquad\Pr\left\{ \left|\sum_{l\in\mathcal{L}_{S}}\delta_{S_{l}}-\mathbb{E}\left[\sum_{l\in\mathcal{L}_{S}}\delta_{S_{l}}\right]\right|=\frac{o(\beta)k}{|\mathcal{L}_{S}|}|\mathcal{L}_{S}|\right\} \geq 
 \end{align*}
 \begin{align*}
 & \left(1-2\exp(-\frac{2o(\beta^{2})k^{2}|\mathcal{L}_{P}|}{|\mathcal{L}_{P}|^{2}L_{P}^{2}})\right)\cdot\\
 & \hspace{1.0in}\left(1-2\exp(-\frac{2o(\beta^{2})k^{2}|\mathcal{L}_{S}|}{|\mathcal{L}_{S}|^{2}L_{S}^{2}})\right)=\\
 & \left(1-2\exp(-\frac{2o(\beta^{2})k^{2}}{|\mathcal{L}_{P}|L_{P}^{2}})\right)\cdot \\
 & \hspace{1.2in} \left(1-2\exp(-\frac{2o(\beta^{2})k^{2}}{|\mathcal{L}_{S}|L_{S}^{2}})\right)\geq\\
 & \left(1-2\exp(-\frac{2o(\beta^{2})k}{L_{P}^{2}})\right)\cdot\left(1-2\exp(-\frac{2o(\beta^{2})k}{L_{S}^{2}})\right)\geq\\
 & \left(1-2\exp(-\frac{2o(\beta^{2})\beta}{O(\log^2\frac{1}{\beta})}n)\right)\cdot\left(1-2\exp(-2\beta^{3}o(\beta^{2})n)\right)=\\
 & (1-2^{-\Omega(n)})\cdot(1-2^{-\Omega(n)})=1-2^{-\Omega(n)},\end{align*}
 where in our derivation we used the fact that $|\mathcal{L}_{P}|\leq k$ 
and $|\mathcal{L}_{S}|\leq k,$ since $\mathcal{L}_{P}$ and $\mathcal{L}_{S}$ are subsets of $\{1,\cdots,k-1\}.$

\begin{IEEEbiographynophoto}
{S. M. Sadegh Tabatabaei Yazdi} is a senior engineer at Qualcomm Research and Development center in San Diego, CA. Prior to joining Qualcomm, from August 2011 to July 2012, he was a Postdoctoral Scholar at the Electrical Engineering department at UCLA where he was working on the design of optimal LDPC decoders for erroneous hardware and on the design of optimal coding schemes for synchronization channels. Dr. Tabatabaei Yazdi received his PhD degree from Texas A\&M University in August 2011 and his Masters degree from University of Michigan, Ann Arbor in December 2007. The focus of his PhD and Masters research was on the design of low complexity and optimal network codes for different topologies of wired and wireless networks. He also received his Bachelors degree in Electrical Engineering from Sharif University of Technology, Tehran, Iran, in June 2006.
\end{IEEEbiographynophoto}
\begin{IEEEbiographynophoto}
{Lara Dolecek}
 (S'05-- M'10--SM'13) is an Assistant Professor  with the Electrical Engineering Department at the University of California, Los Angeles (UCLA) where she is the director of the Laboratory for Robust Information Systems. She holds a B.S. (with honors), M.S. and Ph.D. degrees in Electrical Engineering and Computer Sciences, as well as an M.A. degree in Statistics, all from the University of California, Berkeley. She received the 2007 David J. Sakrison Memorial Prize for the most outstanding doctoral research in the Department of Electrical Engineering and Computer Sciences at UC Berkeley. Prior to joining UCLA, she was a postdoctoral researcher with the Laboratory for Information and Decision Systems at the Massachusetts Institute of Technology. She received  Intel Early Career Faculty Award, University of California Faculty Development Award, and Okawa Research Grant all in 2013, NSF CAREER Award in 2012, and Hellman Fellowship Award in 2011. She is an Associate Editor for IEEE Transactions on Communications and for IEEE Communication Letters and is the lead guest editor for JSAC special issue on emerging data storage. Her research interests span coding and information theory, graphical models, statistical algorithms, and computational methods, with applications to emerging systems for data storage, processing, and communication.
\end{IEEEbiographynophoto}
\end{document}